\documentclass[11pt]{article}
\usepackage{color,soul}
\usepackage{titlesec}
\usepackage{breakcites}
\usepackage[pagebackref]{hyperref}
\usepackage{dirtytalk}
\usepackage{lipsum}
\usepackage{ntheorem}
\usepackage{mathtools}
\usepackage{graphicx}
\usepackage{calc}
\usepackage{stmaryrd}
\usepackage{paralist}
\newlength{\depthofsumsign}
\setlength{\depthofsumsign}{\depthof{$\sum$}}

\setlength{\textwidth}{6.8in}
\setlength{\textheight}{8.8in}
\hoffset=-1.2in
\voffset=-0.5in

%

\usepackage{graphicx}
\usepackage{amsmath}
\usepackage{amssymb}
\usepackage{amsfonts}
\usepackage{array}
\usepackage{tikz}
\usepackage{hhline}
\usepackage{multirow}
\usepackage{subfig}
\usepackage{hyperref}
\usepackage{mathtools}
\usepackage{tikz}
\usepackage{makecell}
\usepackage{pgfplots}
\usepackage{afterpage}
\usepackage{fancyhdr}
\usepackage{setspace}
\usepackage{bm}
\usepackage[linesnumbered,ruled,vlined]{algorithm2e}
\SetKwInput{KwInput}{Input}
\SetKwInput{KwOutput}{Output}
\usepackage{authblk}

\setlength{\parskip}{1.3mm}

\DeclarePairedDelimiter\floor{\lfloor}{\rfloor}

\newtheorem{corollary}{Corollary}
\newtheorem{proposition}{Proposition}
\newtheorem*{proposition-non}{Proposition}

\newtheorem{theorem}{Theorem}

\newtheorem{remark}{Remark}

\newtheorem{construction}{Construction}

\allowdisplaybreaks

\newenvironment{proof}{\noindent{\bf Proof:}\indent}%
                      {\hfill $\Box$\par}

\newcommand{\sym}[1]{{\sf #1}}

\title{Constructions of Efficiently Implementable Boolean Functions with Provable Nonlinearity/Resiliency/Algebraic Immunity Trade-Offs}
\author[ ]{Palash Sarkar}
\affil[ ]{Indian Statistical Institute, 203, B.T. Road, Kolkata, India 700108}
\affil[ ]{Email: {\tt palash@isical.ac.in}}

\date{\today}

\begin{document}

\maketitle

\begin{abstract}
	We describe several families of efficiently implementable Boolean functions achieving provable trade-offs between resiliency, nonlinearity, and algebraic immunity.
	In particular, the following statement holds for each of the function families that we propose. Given integers $m_0\geq 0$, $x_0\geq 1$, and $a_0\geq 1$, it is
	possible to construct an $n$-variable function which has resiliency at least $m_0$, linear bias (which is an equivalent method of expressing nonlinearity)
	at most $2^{-x_0}$ and algebraic immunity at least $a_0$; further, $n$ is linear in $\max(m_0, x_0, a_0)$, and the function can be implemented using $O(n)$
	2-input gates, which is essentially optimal. 

	\noindent{\bf Keywords:} Boolean function, resiliency, nonlinearity, algebraic immunity, efficient implementation.
\end{abstract}

\section{Introduction \label{sec-intro}}
Boolean functions have widespread applications in various areas of computer science and engineering. For cryptographic applications, certain properties of Boolean
functions have been identified as necessary for providing resistance to known attacks. Three such extensively studied properties are resiliency, nonlinearity and
algebraic immunity. Over the last few decades extensive research has been carried out on various aspects of Boolean functions possessing one or more of these three
properties. We refer to the excellent book~\cite{BF-book} for a very comprehensive and unified treatment of cryptographic properties of Boolean functions.

It is easy to construct functions which maximise any one of the three properties of resiliency, nonlinearity and algebraic immunity. For an $n$-variable function,
the maximum possible order of resiliency is $n-1$ and the only functions which achieve this order of resiliency are the two affine functions which are non-degenerate
on all the $n$ variables. For even $n$, the maximum possible nonlinearity is achieved by bent functions~\cite{DBLP:journals/jct/Rothaus76} and there are many well known constructions
of bent functions. The maximum possible algebraic immunity~\cite{DBLP:conf/eurocrypt/CourtoisM03} of an $n$-variable function is $\lceil n/2\rceil$, and the majority function 
achieves this value of algebraic immunity~\cite{DBLP:journals/dcc/DalaiMS06}. Functions maximising one of the properties usually have poor behaviour with respect to the
other two properties. Affine functions have minimum nonlinearity and algebraic immunity, bent functions are not balanced (i.e. not 0-resilient), while
the majority function has poor nonlinearity and resiliency. 

This brings up the issue of trade-offs between these properties. There are two aspects to such trade-offs.
The first aspect is that of determining the exact nature of the trade-off curve, and the second aspect is that of obtaining construction methods for functions which achieve
a desired trade-off. Both of these are very difficult questions and progress on answering these questions have been very slow. 

While resiliency, nonlinearity and algebraic immunity are security properties of Boolean functions, there is another aspect of Boolean functions which is also of great
practical importance. For use in actual design of cryptographic systems, it is crucial that any component Boolean functions is efficient to implement. 
A measure of implementation
efficiency is the number of 2-input gates that is required to implement a Boolean function. From a cryptographic point of view, along with good security properties, a 
Boolean function also needs to have a low gate count. Keeping the efficiency aspect in mind, the challenge is the following. 

\begin{quote}
	\textit{Challenge:} Obtain constructions of infinite families of Boolean functions with provable values or bounds on resiliency, nonlinearity, and algebraic
	immunity, such that the functions can be efficiently implemented.
\end{quote}

In this paper, we provide the first general answers to the above challenge. See below for a summary of known partial results.
Instead of nonlinearity, we describe our results in terms of the equivalent notion of linear bias. 
We describe several infinite families of Boolean functions for which the following strong result holds. 
\begin{quote}
	\textit{Theorem (informal):}
Given integers $m_0\geq 0$, $x_0\geq 1$ and $a_0\geq 1$, it is possible to construct an $n$-variable function which has resiliency at least $m_0$,
	linear bias at most $2^{-x_0}$ and algebraic immunity at least $a_0$, where $n$ is linear in $\max(m_0,x_0,a_0)$, and the function can be implemented 
	using $O(n)$ 2-input gates, which is asymptotically optimal in the sense that any function which is at least $m_0$-resilient, has linear bias at most $2^{-x_0}$,
	and algebraic immunity at least $a_0$ must require $\Omega(\max(m_0,x_0,a_0))$ 2-input gates.
\end{quote}

Our construction leverages a recent result from~\cite{cryptoeprint:2025/160} which showed how to construct a special class of bent functions with provable 
lower bound on algebraic immunity. Two of the infinite classes that we describe are obtained by extending such bent functions in a simple manner. The first class
simply adds a number of new variables, while the second class adds a 5-variable, 1-resilient function and then adds a number of new variables. We show that the above mentioned
strong result holds for both of these classes. 

The addition of new variables increases resiliency. Simply adding a signficant number of new variables to increase resiliency may not be completely 
desirable. We describe two more classes of functions which are not obtained by just adding new variables to bent functions. To obtain these new classes we resurrect
a two-and-half decades old sketch of an idea from~\cite{PASALIC2001158}. By suitably fleshing out
the idea with complete details and proofs, we construct families of functions for which the above strong result holds and further for which the functions are
not obtained by adding new variables.

\subsection*{Previous and related works}
To the best of our knowledge, the above mentioned challenge has not been addressed in its full generality in the literature. Partial results are known. Below
we mention these partial results and compare to our results.

\paragraph{Provable algebraic immunity/linear bias trade-offs for balanced (i.e. 0-resilient) functions.} Various constructions have been 
proposed~\cite{DBLP:conf/asiacrypt/CarletF08,DBLP:journals/tit/WangPKX10,DBLP:journals/tit/Carlet11,DBLP:journals/tit/TangCT13,DBLP:journals/jossac/ShanHZL18,DBLP:journals/dcc/WangS19,DBLP:journals/dam/Hu0H20} 
in the literature which provide functions with maximum algebraic immunity and provable upper bound on linear bias for balanced (i.e. 0-resilient) functions.
In contrast, we provide functions which achieve almost optimal linear bias and algebraic immunity which is at least half the maximum possible value. The upper bound
on the linear bias, as well as the actual values of linear bias for concrete functions, obtained from the previously proposed constructions are higher than the
linear bias of the functions that we construct. 

\paragraph{Provable algebraic immunity/linear bias trade-offs for 1-resilient functions.} 
For even $n$, previous works~\cite{DBLP:journals/dam/TuD12,DBLP:conf/cisc/WangLL12,DBLP:journals/ijfcs/TangCT14,DBLP:journals/ijcm/WangZWZW16,DBLP:journals/tit/TangCTZ17} 
have proposed constructions of 1-resilient functions with maximum algebraic immunity and provable upper bound on the linear bias. Till date, the last of this line
of work is~\cite{DBLP:journals/tit/TangCTZ17} which provides functions with lower linear bias compared to all previous works. The work~\cite{DBLP:journals/tit/TangCTZ17} 
also provides a lower bound on the fast algebraic immunity of the constructed functions. In comparison, for even $n$, the 1-resilient functions that we construct
achieve almost optimal linear bias which is lower than the linear bias (both upper bound and actual values for concrete functions) of the functions 
constructed in~\cite{DBLP:journals/tit/TangCTZ17}, but the algebraic immunity is about half the maximum possible value, which also guarantees that the fast algebraic
immunity is about a quarter of the maximum possible value. 

\paragraph{}
So for both the cases of balancedness and 1-resiliency, the previously proposed functions and the functions that we propose achieve different points on the algebraic 
immunity/linear bias trade-off curve. From the point of efficiency, however, the functions that we construct require $O(n)$ gates for $n$-variable functions, while the previous
functions essentially require discrete logarithm computation and hence require super-polynomial size circuits. In concrete terms, this has a very important effect. 
Our constructions can be easily scaled up to achieve target values of algebraic immunity, fast algebraic immunity and linear bias. Later we provide concrete examples
to illustrate the importance of scalability.


\paragraph{Provable algebraic immunity/linear bias trade-offs for $m$-resilient functions with $m>1$.} 
As far as we are aware, there is no previous work in the literature which provides provable algebraic immunity/linear bias trade-off for $m$-resilient functions with $m>1$. 
So for $m>1$, we provide the first constructions of such functions.

\paragraph{Provable algebraic immunity/resiliency trade-off.}
From the viewpoint of theoretical computer science, the properties of algebraic immunity and resiliency were shown~\cite{DBLP:journals/siamcomp/ApplebaumL18} to be key parameters of 
the ``local function'' required in Goldreich's construction~\cite{DBLP:journals/eccc/ECCC-TR00-090} of pseudorandom generators. Motivated 
by~\cite{DBLP:journals/siamcomp/ApplebaumL18}, a recent study~\cite{DBLP:journals/dcc/DupinMR23} dealt quite extensively with the provable trade-off between resiliency and algebraic 
immunity. This line of work does not consider nonlinearity (or equivalently linear bias) of the functions. So while the algebraic immunity/resiliency trade-off is of 
theoretical interest, it is perhaps of limited relevance to the context of cryptographic applications of Boolean functions.

\subsection*{Outline of the paper}
The background and preliminaries are described in Section~\ref{sec-prelim}. In Section~\ref{sec-basic}, we present the two basic constructions. The iterated
construction is presented in Section~\ref{sec-iterate}. Based on the constructions in the previous sections, the main result is presented in Section~\ref{sec-trade-offs}.
The concluding remarks are provided in Section~\ref{sec-conclu}.

\section{Background and Preliminary Results \label{sec-prelim}}
In this section, we provide the basic definitions and present some preliminary results. For further and extensive details on cryptographic properties of
Boolean functions we refer to~\cite{BF-book}.

The cardinality of a finite set $S$ will be denoted by $\#S$.
$\mathbb{F}_2$ denotes the finite field of two elements; $\mathbb{F}_2^n$, where $n$ is a positive integer, denotes the vector space of dimension $n$ over $\mathbb{F}_2$. 
The addition operator over both $\mathbb{F}_2$ and $\mathbb{F}_2^n$ will be denoted by $\oplus$. The product (which is also the logical AND) of $x,y\in\mathbb{F}_2$
will simply be written as $xy$.
A bit vector of dimension $n$, i.e. an element of $\mathbb{F}_2^n$ will also be considered to be an $n$-bit binary string.

Let $n$ be a positive integer. The support of $\mathbf{x}=(x_1,\ldots,x_n)\in \mathbb{F}_2^n$ is $\sym{supp}(\mathbf{x})=\{1\leq i\leq n: x_i=1\}$,
and the weight of $\mathbf{x}$ is $\sym{wt}(\mathbf{x})=\#\sym{supp}(\mathbf{x})$. 
By $\mathbf{0}_n$ and $\mathbf{1}_n$ we will denote the all-zero and all-one strings of length $n$ respectively. 
For $\mathbf{x},\mathbf{y}\in\mathbb{F}_2^n$, with $\mathbf{x}=(x_1,\ldots,x_n)$ and $\mathbf{y}=(y_1,\ldots,y_n)$ the distance between $\mathbf{x}$ and $\mathbf{y}$
is $d(\mathbf{x},\mathbf{y})=\#\{i: x_i\neq y_i\}$; the inner product of $\mathbf{x}$ and $\mathbf{y}$ is 
$\langle \mathbf{x},\mathbf{y}\rangle = x_1y_1 \oplus \cdots \oplus x_ny_n$. 

\paragraph{Boolean function.}
An $n$-variable Boolean function $f$ is a map $f:\mathbb{F}_2^n\rightarrow \mathbb{F}_2$. 
The weight of $f$ is $\sym{wt}(f)=\#\{\mathbf{x}\in\mathbb{F}_2^n:f(\mathbf{x})=1\}$; $f$ is said to be \textit{balanced} if $\sym{wt}(f)=2^{n-1}$. 
The canonical ordering of the elements of $\mathbb{F}_2^n$ is the ordering where for $0\leq i<2^n$, the $i$-th element in the ordering is the $n$-bit binary representation 
of $i$. With respect to the canonical ordering an $n$-variable function $f$ can be represented by a 
binary string of length $2^n$, where the $i$-th bit of the string is the value of $f$ on the $i$-th element of the canonical representation. 
We call such a string to be the \textit{string (or truth table) representation} of $f$.

The \textit{algebraic normal form (ANF) representation} of an $n$-variable Boolean function $f$ is the representation of $f$ as an element of the 
polynomial ring $\mathbb{F}_2[X_1,\ldots,X_n]/(X_1^2\oplus X_1,\ldots,X_n^2\oplus X_n)$ in the following manner:
$f(X_1,\ldots,X_n) = \bigoplus_{\bm{\alpha}\in\mathbb{F}_2^n} a_{\bm{\alpha}} \mathbf{X}^{\bm{\alpha}}$, where 
$\mathbf{X}=(X_1,\ldots,X_n)$; for $\bm{\alpha}=(\alpha_1,\ldots,\alpha_n)\in \mathbb{F}_2^n$, $\mathbf{X}^{\bm{\alpha}}$ denotes the monomial
$X_1^{\alpha_1}\cdots X_n^{\alpha_n}$; and $a_{\bm{\alpha}} \in \mathbb{F}_2$. 
The (algebraic) degree of $f$ is $\sym{deg}(f)=\max\{\sym{wt}(\bm{\alpha}):a_{\bm{\alpha}}=1\}$; we adopt the convention that the zero function has degree 0.
The degree (or sometimes also called the length) of the monomial $\mathbf{X}^{\bm{\alpha}}$ is $\sym{wt}(\bm{\alpha})$.

It is useful to introduce a notation for the concatenation of two functions.
\begin{construction}\label{cons-concat}
	Let $g$ and $h$ be $n$-variable functions. Define an $(n+1)$-variable function in the following manner.
	\begin{eqnarray*}
		f(X_1,\ldots,X_{n+1}) & = & (1\oplus X_{n+1})g(X_1,\ldots,X_n) \oplus X_{n+1}h(X_1,\ldots,X_n).
	\end{eqnarray*}
	We denote $f$ as $\sym{Concat}(g,h)$.
\end{construction}
Note that the string representation of $f=\sym{Concat}(g,h)$ is obtained by concatenating the string representations of $g$ and $h$.

An $n$-variable function $f(X_1,\ldots,X_n)$ is said to be \textit{non-degenerate} on the variable $X_i$, $1\leq i\leq n$, if there are $\bm{\alpha},\bm{\beta}\in\mathbb{F}_2^n$ 
which differ only in the $i$-th position and $f(\bm{\alpha})\neq f(\bm{\beta})$; if there are no such $\bm{\alpha}$ and $\bm{\beta}$, then $f$ is said to be
degenerate on the variable $X_i$. The following result provides a lower bound on the number of 2-input gates required to compute a non-degenerate function.
\begin{proposition}[Proposition~1 of \cite{Ry2018}]\label{prop-non-deg-gc}
Any circuit for an $n$-variable non-degenerate function consists of at least $n-1$ 2-input gates.
\end{proposition}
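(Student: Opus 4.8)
The plan is to prove the bound by a short edge-counting argument on the directed acyclic graph underlying the circuit. I model a circuit for $f$ as a DAG whose sources are input nodes labelled $X_1,\dots,X_n$ (each of in-degree $0$), whose remaining nodes are $2$-input gates (each of in-degree $2$, labelled by a binary Boolean operation), with one distinguished node $o$ computing $f$; the claim is trivial for $n\le 1$, so assume $n\ge 2$. The first step is to record the elementary semantic fact, proved by induction along a topological order of the DAG, that the value computed at a node $v$ depends only on those input variables from which there is a directed path to $v$: an input node depends only on itself, and a gate depends only on the inputs feeding its two predecessors, each of which reaches the gate. Applying this to $v=o$, if some $X_i$ had no directed path to $o$, then $f$ would be unchanged when $X_i$ is flipped, contradicting non-degeneracy on $X_i$. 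Hence every one of the $n$ input nodes has a directed path to $o$.

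Next I would restrict attention to the ``useful'' part of the circuit: let $C'$ be the sub-DAG induced by all nodes having a directed path to $o$ (including $o$ itself). By the previous step $C'$ contains all $n$ input nodes; and if $v$ is a gate lying in $C'$, then both in-neighbours of $v$ also lie in $C'$ (each reaches $o$ through $v$), so $v$ keeps in-degree $2$ within $C'$, while input nodes keep in-degree $0$. Writing $k$ for the number of gate nodes of $C'$, the DAG $C'$ thus has exactly $n+k$ vertices and exactly $2k$ edges. Since every vertex of $C'$ reaches the single vertex $o$, the underlying undirected graph of $C'$ is connected, so it has at least $(n+k)-1$ edges. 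Combining, $2k\ge n+k-1$, that is $k\ge n-1$; and since the original circuit contains at least the $k$ gates of $C'$, it has at least $n-1$ gates.

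I do not expect a genuine obstacle here; the only points needing care are bookkeeping ones about the circuit model. If unary negation gates are permitted in the model but not counted towards gate complexity, I would instead let $C'$ contain $k_1$ unary and $k_2$ binary gates, so that $C'$ has $n+k_1+k_2$ vertices and $k_1+2k_2$ edges; connectivity then gives $k_1+2k_2\ge n+k_1+k_2-1$, i.e. $k_2\ge n-1$, leaving the bound on the number of $2$-input gates unchanged. Constant input nodes contribute nothing, since a constant that influences $o$ must be routed through a gate and is therefore already accounted for among the edges. The one substantive ingredient is the semantic lemma of the first paragraph, but its proof is a routine structural induction on the topological order of the DAG.
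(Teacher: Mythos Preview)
Your argument is correct. The paper does not supply its own proof of this proposition; it merely quotes it as Proposition~1 of Williams' lecture notes~\cite{Ry2018}, so there is nothing in the paper to compare against directly. Your edge-counting/connectivity argument is one of the standard proofs: restrict to the sub-DAG of nodes reaching the output, observe that it contains all $n$ input sources by non-degeneracy, count $2k$ in-edges against the $n+k-1$ edges forced by connectedness, and conclude $k\ge n-1$. The bookkeeping remarks about unary gates and constants are handled correctly (constants in $C'$ only strengthen the inequality, since they add vertices but no in-edges). The cited source, being a gate-elimination lecture, may instead phrase the proof inductively---pick a bottom gate both of whose inputs are variables, fix one variable to eliminate that gate, and recurse---but your global counting argument is equally valid and arguably cleaner for this particular statement.
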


Functions of degree at most 1 are said to be affine functions. Affine functions with $a_{\mathbf{0}_n}=0$ are said to be linear functions. 
Each $\bm{\alpha}=(\alpha_1,\ldots,\alpha_n)\in\mathbb{F}_2^n$, defines the linear function 
$\langle\bm{\alpha},\mathbf{X} \rangle = \langle \bm{\alpha},(X_1,\ldots,X_n)\rangle=\alpha_1X_1\oplus\cdots\oplus \alpha_nX_n$. If $\sym{wt}(\bm{\alpha})=w$, then the function
$\langle \bm{\alpha},\mathbf{X} \rangle$ is non-degenerate on exactly $w$ of the $n$ variables. 


%

\paragraph{Nonlinearity and Walsh transform.}
The distance between two $n$-variable functions $f$ and $g$ is $d(f,g)=\#\{\mathbf{x}\in\mathbb{F}_2^n:f(\mathbf{x})\neq g(\mathbf{x})\}=\sym{wt}(f\oplus g)$.
The \textit{nonlinearity} of an $n$-variable function $f$ is defined to be 
$\sym{nl}(f) = \min_{\bm{\alpha}\in\mathbb{F}_2^n} \{d(f,\langle \bm{\alpha},\mathbf{X} \rangle), d(f,1\oplus \langle \bm{\alpha},\mathbf{X} \rangle)\}$, 
i.e. the nonlinearity of $f$ is the minimum of the distances of $f$ to all the affine functions. 

The Walsh transform of an $n$-variable function $f$ is the map $W_f:\mathbb{F}_2^n\rightarrow \mathbb{Z}$, where for $\bm{\alpha}\in\mathbb{F}_2^n$,
$W_f(\bm{\alpha}) = \sum_{\mathbf{x}\in\mathbb{F}_2^n} (-1)^{f(\mathbf{x}) \oplus \langle \bm{\alpha}, \mathbf{x} \rangle}$. 
From the definition it follows that $W_f(\bm{\alpha})=2^n-2d(f,\langle \bm{\alpha},\mathbf{X} \rangle)$. Consequently, 
$\sym{nl}(f) = 2^{n-1} - \frac{1}{2}\max_{\bm{\alpha} \in \mathbb{F}_2^n} |W_f(\bm{\alpha})|$. The nonlinearity of $f$ is invariant under an invertible
linear transformation on the variables of $f$. 

We define the \textit{linear bias} of an $n$-variable function $f$ to be 
$\sym{LB}(f)=\max_{\bm{\alpha}\in\mathbb{F}_2^n}\lvert W_f(\bm{\alpha})\rvert/2^n=1-\sym{nl}(f)/2^{n-1}$. From a cryptographic point of view, 
the linear bias, rather than the nonlinearity, is of importance, since it is the linear bias which is used to quantify the resistance to (fast) correlation attacks.

\paragraph{Bent functions.}
An $n$-variable function $f$ is said to be bent~\cite{DBLP:journals/jct/Rothaus76} if $W_f(\bm{\alpha})=\pm 2^{n/2}$ for all $\bm{\alpha}\in\mathbb{F}_2^n$.
From the definition it follows that bent functions can exist only if $n$ is even. An $n$-variable bent function has nonlinearity $2^{n-1} - 2^{n/2-1}$ 
(resp. linear bias $2^{-n/2}$), and this is the maximum possible nonlinearity (resp. least possible linear bias) that can be achieved by any $n$-variable function.

The well known Maiorana-McFarland class of bent functions is defined as follows. For $k\geq 1$, let $\psi:\{0,1\}^k\rightarrow\{0,1\}^k$ be a bijection and 
$h:\{0,1\}^k\rightarrow \{0,1\}$ be a Boolean function. 
Let $\mathbf{X}=(X_1,\ldots,X_k)$ and $\mathbf{Y}=(Y_1,\ldots,Y_k)$. For $k\geq 1$, $(\psi,h)\mbox{-}\sym{MM}_{2k}$ is defined to be the following function.
\begin{eqnarray}
	(\psi,h)\mbox{-}\sym{MM}_{2k}(\mathbf{X},\mathbf{Y}) & = & \langle \psi(\mathbf{X}),\mathbf{Y}\rangle \oplus h(\mathbf{X}). \label{eqn-MM-even} 
\end{eqnarray}
Note that the degree of $(\psi,h)\mbox{-}\sym{MM}_{2k}$ is $\max(1+\max_{1\leq i\leq k}\sym{deg}(\psi_i),\sym{deg}(h))$, where 
$\psi_1,\ldots,\psi_k$ are the component functions of $\psi$.

\paragraph{Almost optimal linear bias.} 
For a positive integer $n$, the covering radius bound (see Theorem~3 in Chapter~3 of~\cite{BF-book}) on the nonlinearity of an $n$-variable function $f$ is the following:
$\sym{nl}(f)\leq 2^{n-1}-\lfloor 2^{n/2-1}\rfloor$, and equivalently, $\sym{LB}(f)\geq \lfloor 2^{n/2-1}\rfloor/2^{n-1}$. The bound is achieved if and only if $f$ is bent. 
Let $\chi(n)=\lfloor 2^{n/2-1}\rfloor/2^{n-1}$. We say that $f$ has \textit{almost optimal linear bias} if 
$\chi(n) \leq \sym{LB}(f)\leq 2\chi(n)$, i.e. if the linear bias of $f$ is at most two times the lower bound arising from the covering radius bound.
If $n$ is even, and $\sym{nl}(f)=2^{n-1}-2^{n/2}$, then $\sym{LB}(f)=2^{-(n-2)/2}=2\chi(n)$,
while if $n$ is odd, and $\sym{nl}(f)=2^{n-1}-2^{(n-1)/2}$, then $\sym{LB}(f)=2^{-(n-1)/2}<2\chi(n)$; in both cases the linear bias is almost optimal.

\paragraph{Resilient functions.}
Let $n$ be a positive integer and $m$ be an integer such that $0\leq m<n$. 
An $n$-variable function $f$ is said to be $m$-resilient if $W_f(\bm{\alpha})=0$ for all $\bm{\alpha}$ satisfying $\sym{wt}(\bm{\alpha})\leq m$. 
Equivalently, $f$ is $m$-resilient if and only if $d(f,\langle \bm{\alpha},\mathbf{X} \rangle)=2^{n-1}$ for $\bm{\alpha}$ satisfying $\sym{wt}(\bm{\alpha})\leq m$, i.e.
if the distance between $f$ and any linear function which is non-degenerate on at most $m$ variables is equal to $2^{n-1}$.
Siegenthaler's bound~\cite{DBLP:journals/tit/Siegenthaler84} relates $n$, $m$ and the degree $d$ of $f$ in the following manner: 
\begin{eqnarray}\label{eqn-s-bound}
	\mbox{if $m=n-1$, then $d=1$, and if $m\leq n-2$, then $d\leq n-m-1$.} 
\end{eqnarray}
Divisibility results obtained in~\cite{DBLP:conf/crypto/SarkarM00,DBLP:conf/seta/Carlet01,CS02} show that the Walsh transform values of an $n$-variable,
$m$-resilient function $f$ having degree $d$ is divisible by $2^{m+2+\lfloor(n-m-2)/d\rfloor}$. 

Suppose $f(W,\mathbf{X})$ is defined to be $f(W,\mathbf{X})=W\oplus g(\mathbf{X})$. Then $\sym{nl}(f)=2\cdot\sym{nl}(g)$ and $\sym{LB}(f)=\sym{LB}(g)$. 
Further, $f$ is balanced, and if $g$ is $m$-resilient, then $f$ is $(m+1)$-resilient. 

\paragraph{Algebraic immunity.}
The algebraic immunity of an $n$-variable function $f$ is defined~\cite{DBLP:conf/eurocrypt/CourtoisM03,DBLP:conf/eurocrypt/MeierPC04} as follows:
$\sym{AI}(f)=\min_{g\neq 0} \{\sym{deg}(g): \mbox{ either } gf=0, \mbox{ or }  g(f\oplus 1)=0\}$.
It is known~\cite{DBLP:conf/eurocrypt/CourtoisM03} that $\sym{AI}(f)\leq \lceil n/2\rceil$. If the bound is achieved, then we say that $f$ has optimal
algebraic immunity.

\paragraph{Fast algebraic immunity.}
Given an $n$-variable function $f$, suppose that there are $n$-variable functions $g\neq 0$ and $h$ of degrees $e$ and $d$ respectively such that
$gf=h$. If $e+d\geq n$, then the existence of $g$ and $h$ satisfying $gf=h$ is guaranteed~\cite{DBLP:conf/crypto/Courtois03}. The 
fast algebraic immunity (FAI) of $f$ is defined in the following manner: 
$\sym{FAI}(f)\allowbreak =\allowbreak \min \left( 2\sym{AI}(f), \allowbreak
\min_{g\neq 0}\allowbreak \{\sym{deg}(g) \allowbreak + \allowbreak \sym{deg}(fg):
\allowbreak 1 \allowbreak \leq \allowbreak \sym{deg}(g) \allowbreak < \allowbreak \sym{AI}(f)\}\right)$.
The following bounds hold for $\sym{FAI}(f)$: $1+\sym{AI}(f)\leq \sym{FAI}(f)\leq 2\,\sym{AI}(f)$. In particular, the lower bound
of $1+\sym{AI}(f)$ on $\sym{FAI}(f)$ suggests that a target value (which is not necessarily optimal) of $\sym{FAI}(f)$ may be achieved by designing functions with a 
desired value of algebraic immunity. 

\paragraph{Majority function.}
For $n\geq 1$, let $\sym{Maj}_n:\{0,1\}^n\rightarrow \{0,1\}$ be the majority function defined in the following manner.
For $\mathbf{x}\in\{0,1\}^n$, $\sym{Maj}(\mathbf{x}) = 1$ if and only if $\sym{wt}(\mathbf{x}) > \floor{n/2}$. 
\begin{theorem}[Theorems~1 and~2 of~\cite{DBLP:journals/dcc/DalaiMS06}] \label{thm-maj-prop}
Let $n$ be a positive integer. 
	\begin{compactenum}
		\item $\sym{Maj}_n$ has the maximum possible AI of $\lceil n/2\rceil$.
		\item The degree of $\sym{Maj}_n$ is equal to $2^{\floor{\log_2n}}$.
	\end{compactenum}
\end{theorem}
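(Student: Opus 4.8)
We must establish the two assertions about $\sym{Maj}_n$, namely $\sym{AI}(\sym{Maj}_n)=\ceil{n/2}$ and $\sym{deg}(\sym{Maj}_n)=2^{\floor{\log_2 n}}$.

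For the algebraic immunity, since $\sym{AI}(f)\le\ceil{n/2}$ holds for every $n$-variable $f$, it suffices to rule out nonzero annihilators of $\sym{Maj}_n$ and of $1\oplus\sym{Maj}_n$ of degree $<\ceil{n/2}$. The engine is an elementary observation about ANFs: if $g\neq 0$, pick a monomial $\mathbf{X}^{\bm{\alpha}}$ of \emph{minimum} weight occurring in the ANF of $g$ and evaluate $g$ at the characteristic vector $\mathbf{x}_{\bm{\alpha}}$ of $\sym{supp}(\bm{\alpha})$; since all proper sub-monomials of $\mathbf{X}^{\bm{\alpha}}$ then have coefficient $0$, we get $g(\mathbf{x}_{\bm{\alpha}})=1$ with $\sym{wt}(\mathbf{x}_{\bm{\alpha}})\le\sym{deg}(g)$, and substituting $X_i\mapsto 1\oplus X_i$ (which preserves degree) produces in the same way a point $\mathbf{x}$ with $g(\mathbf{x})=1$ and $\sym{wt}(\mathbf{x})\ge n-\sym{deg}(g)$. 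Now if $g\neq 0$ satisfies $g\cdot\sym{Maj}_n=0$, then $g$ vanishes wherever $\sym{wt}>\floor{n/2}$, so a point with $g=1$ and weight $\ge n-\sym{deg}(g)$ must satisfy $n-\sym{deg}(g)\le\floor{n/2}$, i.e. $\sym{deg}(g)\ge\ceil{n/2}$; and if $g\neq 0$ satisfies $g\cdot(1\oplus\sym{Maj}_n)=0$, then $g$ vanishes wherever $\sym{wt}\le\floor{n/2}$, so a point with $g=1$ and weight $\le\sym{deg}(g)$ forces $\sym{deg}(g)\ge\floor{n/2}+1\ge\ceil{n/2}$. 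Hence $\sym{AI}(\sym{Maj}_n)\ge\ceil{n/2}$, and equality follows from the general upper bound.

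For the degree, I would use the standard ANF form of a symmetric function: writing $\sigma_j$ for the sum of all weight-$j$ monomials and $\sym{Maj}_n=\bigoplus_{j=0}^{n}\lambda_j\sigma_j$, the value of $\sym{Maj}_n$ on a weight-$k$ input equals $\bigoplus_{j:\,\binom{k}{j}\text{ odd}}\lambda_j$, and by Lucas' theorem $\binom{k}{j}$ is odd precisely when the binary digits of $j$ are dominated by those of $k$ (write $j\preceq k$). Since the zeta transform over the Boolean lattice is self-inverse modulo $2$, this inverts to $\lambda_j=\bigoplus_{k\preceq j}v_k$, where $v_k=1$ iff $k>\floor{n/2}$. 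Put $t=\floor{\log_2 n}$, so $2^t\le n<2^{t+1}$, whence $\floor{n/2}<2^t$ and $\ceil{n/2}\le 2^t$. The only sub-masks of $2^t$ are $0$ and $2^t$, giving $\lambda_{2^t}=v_0\oplus v_{2^t}=0\oplus 1=1$. For $2^t<j\le n$, write $j=2^t+r$ with $0<r<2^t$; every sub-mask of $j$ is $k'$ or $2^t+k'$ with $k'\preceq r$, and using $r\le n-2^t\le\floor{n/2}$ we get $v_{k'}=0$, while $2^t+k'>\floor{n/2}$ (and $\le j\le n$) gives $v_{2^t+k'}=1$; thus $\lambda_j$ equals the number $2^{\sym{wt}(r)}$ of such $k'$ modulo $2$, which is $0$ because $r>0$. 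Therefore $2^t$ is the largest $j$ with $\lambda_j=1$, i.e. $\sym{deg}(\sym{Maj}_n)=2^{\floor{\log_2 n}}$.

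The only delicate point is the small inequality chain $\floor{n/2}<2^{\floor{\log_2 n}}\le n$ together with $\ceil{n/2}\le 2^{\floor{\log_2 n}}$, which has to be verified separately for even and odd $n$ and is precisely what forces the degree to land on a power of two; everything else — evaluating a monomial at a characteristic vector, inverting the zeta transform, and the sub-mask count — is routine bookkeeping and presents no real obstacle.
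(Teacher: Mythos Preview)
The paper does not prove this theorem at all: it is quoted verbatim as Theorems~1 and~2 of Dalai--Maitra--Sarkar and used as a black box, so there is no in-paper argument to compare against. Your proof is correct, and in fact your algebraic-immunity argument (minimum-weight monomial in the ANF, plus the involution $X_i\mapsto 1\oplus X_i$) is precisely the device used in the cited reference; your degree computation via the simplified ANF of a symmetric function, Lucas' theorem, and M\"obius inversion on the Boolean lattice is likewise the standard route and all the inequalities you flag as ``delicate'' check out.
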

\begin{proposition}[Proposition~7 of~\cite{cryptoeprint:2025/160}]\label{prop-maj-gc}
	$\sym{Maj}_n$ can be implemented using $O(n)$ NAND gates.
\end{proposition}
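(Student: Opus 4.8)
The plan is to split the computation of $\sym{Maj}_n$ into two stages and to bound the number of gates used by each. The starting observation is that $\sym{Maj}_n(\mathbf{x})=1$ if and only if $\sym{wt}(\mathbf{x})=x_1+\cdots+x_n\geq \lfloor n/2\rfloor+1$, where the sum is taken over the integers. So it suffices to (i) build a circuit whose output is the integer $\sym{wt}(\mathbf{x})$ written in binary using $b=\lceil\log_2(n+1)\rceil$ bits, and then (ii) feed that $b$-bit integer into a circuit that tests whether it is at least the fixed constant $\lfloor n/2\rfloor+1$. Cascading the two stages computes $\sym{Maj}_n$, and the total gate count is the sum of the two.

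For stage (i) I would use divide and conquer on the input bits. Partition the $n$ inputs into two blocks of sizes $\lceil n/2\rceil$ and $\lfloor n/2\rfloor$, recursively produce the binary representations of the two block-weights, and add these two numbers with a ripple-carry adder, namely a chain of full adders using a constant number of $2$-input gates per bit position. Writing $T(n)$ for the resulting number of $2$-input gates, this gives $T(1)=0$ and $T(n)\leq T(\lceil n/2\rceil)+T(\lfloor n/2\rfloor)+c\log_2 n$ for a constant $c$, since the operands being added have $O(\log n)$ bits. Expanding the recursion, the cost contributed at recursion depth $d$ is at most $2^d\cdot O(\log(n/2^d))=2^d\cdot O(\log_2 n-d)$, and the substitution $k=\log_2 n-d$ rewrites $\sum_d 2^d(\log_2 n-d)$ as $n\sum_{k\geq 0}k\,2^{-k}=O(n)$; hence $T(n)=O(n)$. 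Equivalently, one can invoke the classical fact that the symmetric bit-counting function on $n$ inputs has linear circuit size, realised by a carry-save, Wallace-tree style reduction of the $n$ input bits (each of significance $2^0$) down to one bit per position using $O(n)$ full adders.

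For stage (ii), comparing a $b$-bit integer against a fixed constant is a routine cascade that scans the bit positions from most significant to least significant while maintaining two running flags, ``strictly exceeded the constant so far'' and ``equal to the constant so far''; updating the flags at each position costs a constant number of $2$-input gates, so the comparator uses $O(b)=O(\log n)$ gates. Combining the stages, $\sym{Maj}_n$ is computed by $O(n)+O(\log n)=O(n)$ $2$-input gates. Finally, since the NAND operation is functionally complete, every $2$-input gate and every negation in this circuit can be replaced by a fixed-size gadget built from NAND gates; this inflates the size by only a constant factor and produces an $O(n)$-gate NAND circuit for $\sym{Maj}_n$, as claimed.

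The only genuinely nontrivial point is the linear --- rather than the naive $O(n\log n)$ --- size bound in stage (i). The thing to notice is that although $\Theta(n)$ additions are performed during the recursion, each acts on operands of only $O(\log n)$ bits, and the geometric growth in the number of additions down the recursion tree is outweighed by the logarithmic shrinkage of the operand lengths, so the series defining $T(n)$ converges to $O(n)$. Everything else is routine circuit bookkeeping.
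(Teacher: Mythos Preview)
Your argument is correct and is the standard proof that the symmetric threshold (and in particular majority) has linear circuit size: compute the Hamming weight by a balanced tree of ripple-carry adders and then compare the $O(\log n)$-bit result to a constant. The recursion $T(n)\le T(\lceil n/2\rceil)+T(\lfloor n/2\rfloor)+c\log n$ indeed solves to $O(n)$, as your substitution shows, and the final pass to NAND via functional completeness is unproblematic.

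As for comparison with the paper: the present paper does not actually prove this proposition. It is quoted verbatim as Proposition~7 of~\cite{cryptoeprint:2025/160} and used as a black box, so there is no in-paper argument to compare against. Your write-up is a self-contained justification of the cited claim and would stand on its own; the only stylistic suggestion is that the ``proof plan'' already \emph{is} the proof, so you could drop the hedging and present it as such.
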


\paragraph{Direct sum.}
A simple way to construct a function is to add together two functions on disjoint sets of variables. The constructed function is called the direct sum of the
two smaller functions. Let $n_1$ and $n_2$ be positive integers and $g$ and $h$ be functions on $n_1$ and $n_2$ variables respectively. Define
\begin{eqnarray}\label{eqn-dsum}
	f(X_1,\ldots,X_{n_1},Y_1,\ldots,Y_{n_2}) & = & g(X_1,\ldots,X_{n_1}) \oplus h(Y_1,\ldots,Y_{n_2}).
\end{eqnarray}
Bounds on the algebraic immunity of a function constructed as a direct sum is given by the following result.
\begin{proposition}[Lemma~3 of~\cite{DBLP:conf/eurocrypt/MeauxJSC16}]\label{prop-AI-dsum} 
	For $f$ constructed as in~\eqref{eqn-dsum}, $\max\{\sym{AI}(g),\sym{AI}(h)\} \leq \sym{AI}(f)\leq \sym{AI}(g) + \sym{AI}(h)$.
\end{proposition}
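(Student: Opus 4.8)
The plan is to establish the two inequalities separately, working throughout with the characterisation of algebraic immunity in terms of annihilators: $\sym{AI}(F)$ is the least degree of a nonzero function that annihilates $F$ or annihilates $F\oplus 1$. Recall also that $g$ and $h$ act on disjoint sets of variables, say $\mathbf{X}=(X_1,\ldots,X_{n_1})$ and $\mathbf{Y}=(Y_1,\ldots,Y_{n_2})$, so that a product of a nonzero function in $\mathbf{X}$ with a nonzero function in $\mathbf{Y}$ is again nonzero and its ANF has degree equal to the sum of the two degrees (there is no cancellation between distinct $\mathbf{X}$-monomial times $\mathbf{Y}$-monomial products).

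For the upper bound $\sym{AI}(f)\leq \sym{AI}(g)+\sym{AI}(h)$, write $a=\sym{AI}(g)$, $b=\sym{AI}(h)$, and pick a nonzero $g_1$ in the variables $\mathbf{X}$ of degree $a$ that annihilates $g$ or $g\oplus 1$, and a nonzero $h_1$ in the variables $\mathbf{Y}$ of degree $b$ that annihilates $h$ or $h\oplus 1$. Then $g_1h_1$ is nonzero with $\sym{deg}(g_1h_1)=a+b$. I would run through the $2\times 2$ cases according to which of $g,g\oplus 1$ is killed by $g_1$ and which of $h,h\oplus 1$ is killed by $h_1$: whenever $g_1(\mathbf{x})h_1(\mathbf{y})=1$, the values $g(\mathbf{x})$ and $h(\mathbf{y})$ are both forced, hence so is $f(\mathbf{x},\mathbf{y})=g(\mathbf{x})\oplus h(\mathbf{y})$. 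One reads off that $g_1h_1$ annihilates $f$ in the two ``like'' cases (annihilators of $g,h$ or of $g\oplus1,h\oplus1$) and annihilates $f\oplus 1$ in the two ``unlike'' cases (annihilators of $g,h\oplus1$ or of $g\oplus1,h$). In every case $\sym{AI}(f)\leq\sym{deg}(g_1h_1)=a+b$.

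For the lower bound $\sym{AI}(f)\geq\max\{\sym{AI}(g),\sym{AI}(h)\}$, it suffices by the symmetry of~\eqref{eqn-dsum} to prove $\sym{AI}(f)\geq\sym{AI}(g)$; the bound $\sym{AI}(f)\geq\sym{AI}(h)$ follows by the same argument with $\mathbf{X}$ and $\mathbf{Y}$ exchanged. Let $F(\mathbf{X},\mathbf{Y})$ be any nonzero function annihilating $f$ or $f\oplus 1$, fix a point $(\mathbf{x}_0,\mathbf{y}_0)$ with $F(\mathbf{x}_0,\mathbf{y}_0)=1$, and consider the restriction $F_0(\mathbf{X})=F(\mathbf{X},\mathbf{y}_0)$. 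Substituting constants into an ANF cannot raise degree, so $\sym{deg}(F_0)\leq\sym{deg}(F)$, and $F_0$ is nonzero since $F_0(\mathbf{x}_0)=1$. If $F$ annihilates $f$, then for every $\mathbf{x}$ we get $F_0(\mathbf{x})(g(\mathbf{x})\oplus h(\mathbf{y}_0))=0$, so $F_0$ annihilates $g\oplus h(\mathbf{y}_0)$, which is $g$ or $g\oplus 1$ depending on the bit $h(\mathbf{y}_0)$; if instead $F$ annihilates $f\oplus 1$, then $F_0$ annihilates $g\oplus h(\mathbf{y}_0)\oplus 1$, again one of $g,g\oplus 1$. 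Either way $F_0$ is a nonzero annihilator of $g$ or of $g\oplus 1$, so $\sym{deg}(F)\geq\sym{deg}(F_0)\geq\sym{AI}(g)$; minimising over $F$ gives $\sym{AI}(f)\geq\sym{AI}(g)$.

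Neither direction hides a real obstacle. The only points requiring care are the bookkeeping in the four-way case split for the upper bound, and, for the lower bound, taking the restriction point from the support of $F$ rather than making a naive choice such as ``any $\mathbf{y}_0$ with $h(\mathbf{y}_0)=0$'', which could kill $F$ entirely; choosing $(\mathbf{x}_0,\mathbf{y}_0)$ on the support makes the restricted annihilator certifiably nonzero and the forced value of $h(\mathbf{y}_0)$ merely decides whether it annihilates $g$ or $g\oplus 1$, which is immaterial for the degree bound. Degenerate cases where $g$ or $h$ is constant are absorbed by the same argument, since a nonzero constant has degree $0$ and is annihilated by a function of degree $0$.
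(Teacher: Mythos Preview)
Your argument is correct on both sides. The paper does not actually prove this proposition---it is quoted from~\cite{DBLP:conf/eurocrypt/MeauxJSC16}---but it does remark, just after Proposition~\ref{prop-sub-function}, that the lower bound $\sym{AI}(f)\geq\max\{\sym{AI}(g),\sym{AI}(h)\}$ is the special case of that restriction-to-sub-functions result, which is exactly your lower-bound argument (fix $\mathbf{y}_0$ in the support of the annihilator and observe that the restriction annihilates $g$ or $g\oplus1$); your upper-bound construction via the product $g_1h_1$ of annihilators on disjoint variable sets is the standard one and is not discussed in the paper.
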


\paragraph{Maiorana-McFarland with Majority.}
A lower bound on the algebraic immunity of a special class of Maiorana-McFarland bent functions was obtained in~\cite{cryptoeprint:2025/160}.
\begin{theorem}[Theorem~2 of~\cite{cryptoeprint:2025/160}]\label{thm-MMMaj-AI}
	Let $k\geq 2$, $n=2k$, and $\psi:\{0,1\}^k\rightarrow \{0,1\}^k$ be an affine map, i.e. each of the coordinate functions of $\psi$ is an affine function of the input variables.
	Then 
	\begin{eqnarray}\label{eqn-pi-h-ai}
		\sym{AI}((\psi,h)\mbox{-}\sym{MM}_{2k}) & \geq  & \sym{AI}(h).
	\end{eqnarray}

	Consequently, $\sym{AI}((\psi,\sym{Maj})\mbox{-}\sym{MM}_{2k})\geq \sym{AI}(\sym{Maj}_k)=\lceil k/2\rceil$.
\end{theorem}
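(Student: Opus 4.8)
The plan is to prove the sharper claim that \emph{every} nonzero annihilator of $F:=(\psi,h)\mbox{-}\sym{MM}_{2k}$ has degree at least $\sym{AI}(h)$; the statement about $\sym{AI}((\psi,\sym{Maj})\mbox{-}\sym{MM}_{2k})$ is then immediate by taking $h=\sym{Maj}_k$ and using $\sym{AI}(\sym{Maj}_k)=\lceil k/2\rceil$ from Theorem~\ref{thm-maj-prop}. Since $F\oplus 1=\langle\psi(\mathbf{X}),\mathbf{Y}\rangle\oplus(h\oplus 1)(\mathbf{X})$ is again a Maiorana--McFarland function, now with inner function $h\oplus 1$, and $\sym{AI}(h\oplus 1)=\sym{AI}(h)$, it suffices to take a nonzero $g$ with $gF=0$ and show $\sym{deg}(g)\geq\sym{AI}(h)$. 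I would first reduce to $\psi=\sym{id}$: algebraic immunity is invariant under an affine bijective change of the variables (a standard fact), and the substitution $(\mathbf{X},\mathbf{Y})\mapsto(\psi^{-1}(\mathbf{X}),\mathbf{Y})$ turns $F$ into $\langle\mathbf{X},\mathbf{Y}\rangle\oplus h'(\mathbf{X})$ with $h'=h\circ\psi^{-1}$, while $\sym{AI}(h')=\sym{AI}(h)$ since $\psi^{-1}$ is an affine bijection of $\mathbb{F}_2^k$; relabel $h'$ as $h$.

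The key device is a family of $k$-dimensional affine subspaces of $\mathbb{F}_2^{2k}$ on which $F$ collapses to $h$. For a symmetric $A\in\mathbb{F}_2^{k\times k}$, write $\mathbf{d}_A=(A_{11},\ldots,A_{kk})$ for its diagonal. Using $X_i^2=X_i$ in the Boolean ring, one checks that $\langle\mathbf{X},A\mathbf{X}\rangle=\bigoplus_i A_{ii}X_i=\langle\mathbf{X},\mathbf{d}_A\rangle$ whenever $A$ is symmetric, whence
\[
F(\mathbf{X},A\mathbf{X}\oplus\mathbf{d}_A)=\langle\mathbf{X},A\mathbf{X}\rangle\oplus\langle\mathbf{X},\mathbf{d}_A\rangle\oplus h(\mathbf{X})=h(\mathbf{X}).
\]
Evaluating the identity $gF=0$ at the points of this subspace, the $k$-variable function $g_A(\mathbf{X}):=g(\mathbf{X},A\mathbf{X}\oplus\mathbf{d}_A)$ satisfies $g_A\cdot h=0$, and $\sym{deg}(g_A)\leq\sym{deg}(g)$ because $g_A$ is obtained from $g$ by substituting affine forms for $Y_1,\ldots,Y_k$. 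Hence if $g_A\neq 0$ for some symmetric $A$, then $g_A$ is a nonzero annihilator of $h$, so $\sym{deg}(g)\geq\sym{deg}(g_A)\geq\sym{AI}(h)$ and we are done.

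It remains to treat the case $g_A=0$ for all symmetric $A$, i.e.\ $g$ vanishes on $\mathcal{S}:=\bigcup_{A\text{ symmetric}}\{(\mathbf{x},A\mathbf{x}\oplus\mathbf{d}_A):\mathbf{x}\in\mathbb{F}_2^k\}$. Here I would first prove $\mathcal{S}=\{(\mathbf{x},\mathbf{y}):\langle\mathbf{x},\mathbf{y}\rangle=0\}$. The inclusion ``$\subseteq$'' is the symmetry computation above (it gives $\langle\mathbf{x},A\mathbf{x}\oplus\mathbf{d}_A\rangle=0$). For ``$\supseteq$'', given $(\mathbf{x},\mathbf{y})$ with $\langle\mathbf{x},\mathbf{y}\rangle=0$ I need a symmetric $A$ with $A\mathbf{x}\oplus\mathbf{d}_A=\mathbf{y}$; writing the coordinate equations out, the only coupled constraint concerns the off-diagonal entries $A_{ij}$ ($i,j\in S$, $i\neq j$) with $S=\sym{supp}(\mathbf{x})$, and asks for a simple graph on vertex set $S$ with prescribed degree parities $(y_i)_{i\in S}$ --- realisable precisely because $\sum_{i\in S}y_i=\langle\mathbf{x},\mathbf{y}\rangle$ is even --- after which the remaining entries of $A$ (the diagonal entries outside $S$ being forced, the rest free) complete a symmetric solution. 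Granting this, $g$ vanishes on $\mathcal{S}$ and, being an annihilator of $F$, on $\{F=1\}=\{(\mathbf{x},\mathbf{y}):\langle\mathbf{x},\mathbf{y}\rangle\oplus h(\mathbf{x})=1\}$ as well; since every point with $h(\mathbf{x})=0$ lies in $\mathcal{S}\cup\{F=1\}$, $g$ vanishes on the whole of $\{(\mathbf{x},\mathbf{y}):h(\mathbf{x})=0\}$. Picking $\mathbf{y}_0$ with $g(\cdot,\mathbf{y}_0)\neq 0$ (possible since $g\neq 0$), the nonzero $k$-variable function $g(\cdot,\mathbf{y}_0)$ vanishes wherever $h=0$, so $(h\oplus 1)\,g(\cdot,\mathbf{y}_0)=0$, i.e.\ $g(\cdot,\mathbf{y}_0)$ is a nonzero annihilator of $h\oplus 1$; therefore $\sym{deg}(g)\geq\sym{deg}(g(\cdot,\mathbf{y}_0))\geq\sym{AI}(h\oplus 1)=\sym{AI}(h)$, as needed.

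I expect the crux to be discovering the right subspaces: the shift by $\mathbf{d}_A$ is exactly what absorbs the $X_i^2=X_i$ terms so that $F$ restricts to $h$ (and not to $h$ plus an affine form), and it is then a small combinatorial point --- the graph degree-parity realisation --- that the union of these subspaces is the entire half-space $\{\langle\mathbf{x},\mathbf{y}\rangle=0\}$. Granted these two facts, the degree accounting under affine substitution and the ``$h(\mathbf{x})=0$'' corner case (routed through $h\oplus 1$) are straightforward.
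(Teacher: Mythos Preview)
The paper does not prove this statement; Theorem~\ref{thm-MMMaj-AI} is quoted from~\cite{cryptoeprint:2025/160} and used as a black box, so there is no in-paper argument to compare against.

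Your argument is correct. The reduction to $\psi=\sym{id}$ is legitimate because the paper's definition of the Maiorana--McFarland class already stipulates that $\psi$ is a bijection, so an ``affine map'' here means an affine bijection and $\psi^{-1}$ exists. The symmetric-matrix device is the crux and it works: the identity $\langle\mathbf{x},A\mathbf{x}\rangle=\langle\mathbf{x},\mathbf{d}_A\rangle$ for symmetric $A$ makes $F$ restrict to $h$ on each subspace $\{(\mathbf{x},A\mathbf{x}\oplus\mathbf{d}_A)\}$, the degree-parity realisation correctly identifies the union of these subspaces with the hyperplane $\{\langle\mathbf{x},\mathbf{y}\rangle=0\}$, and the fallback through $g(\cdot,\mathbf{y}_0)$ annihilating $1\oplus h$ closes the remaining case.

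It is worth noting that the ``obvious'' shortcut via the paper's own Proposition~\ref{prop-sub-function} --- fix $\mathbf{Y}=\bm{\alpha}$ and observe that the sub-function is $h$ plus an affine form --- does \emph{not} yield the theorem: algebraic immunity is not preserved under addition of affine functions (for instance $\sym{AI}(\sym{Maj}_3)=2$ but $\sym{AI}(\sym{Maj}_3\oplus X_1)=1$, as $X_2\oplus X_3$ annihilates the latter). So an argument with roughly the sophistication of yours is genuinely needed, and your choice of affine subspaces on which $F$ restricts to $h$ \emph{exactly} (rather than $h$ plus an affine form) is precisely what circumvents this obstacle.
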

Theorem~\ref{thm-MMMaj-AI} holds for any affine map $\psi$. From efficiency considerations, we will consider $\psi$ to be a bit permutation, i.e.
there is a permutation $\rho$ of $\{1,\ldots,k\}$ such that for any $(x_1,\ldots,x_k)\in\mathbb{F}_2^n$, $\psi(x_1,\ldots,x_k)=(x_{\rho(1)},\ldots,x_{\rho(k)})$. 
Implementation of a bit permutation requires only the proper connection pattern, and does not require any gates. For the case where $\psi$ is a bit permutation,
it was shown in~\cite{DBLP:conf/latincrypt/Meaux25} that the algebraic immunity of $(\psi,\sym{Maj})\mbox{-}\sym{MM}_{2k}$ is at most $1+2^{\lceil\log(n/4)\rceil}$.

\paragraph{Gate count.} From an implementation point of view, it is of interest to obtain functions which are efficient to implement. A measure of
implementation efficiency is the number of 2-input gates $s$ required to implement an $n$-variable function $f$. The truth table representation of $f$ requires
$s=\Omega(2^n)$. For even moderate values of $n$, such a representation results in a very large circuit. From an implementation point of view, it is of interest to
obtain functions $f$ where $s=O(n)$, which in view of Proposition~\ref{prop-non-deg-gc} is asymptotically optimal. In this paper, we consider the following 2-input gates: 
XOR, AND, OR, and NAND.

\section{Basic Constructions of $m$-Resilient Functions \label{sec-basic}}
In the present section, we provide two basic constructions of $m$-resilient functions with guarantees on linear bias and algebraic immunities. 

The following result builds on the basic fact that adding new variables increases the order of resiliency. 
\begin{theorem}\label{thm-direct-m>1-even}
	Let $m$ be a non-negative integer, and $n>m$ be another integer such that $n\not\equiv m\bmod 2$. Let $k=(n-m-1)/2$.
	Let $\psi:\{0,1\}^k\rightarrow \{0,1\}^k$ be a bit permutation. 
	Define
	\begin{eqnarray}\label{eqn-res-even-gen}
		\lefteqn{f(X_1,\ldots,X_{m+1},U_1,\ldots,U_k,V_1,\ldots,V_k)} \nonumber \\
		& = & X_1\oplus \cdots\oplus X_{m+1}\oplus (\psi,h)\mbox{-}\sym{MM}_{2k}(U_1,\ldots,U_k,V_1,\ldots,V_k).
	\end{eqnarray}
	Then $f$ is an $n$-variable, $m$-resilient function with linear bias equal to $2^{-(n-m-1)/2}$. Further, 
		if $h=\sym{Maj}_k$, then the algebraic immunity of $f$ is at least $\lceil(n-m-1)/4 \rceil$, and $f$ can be implemented using $O(n)$ gates.
\end{theorem}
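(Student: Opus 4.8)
The plan is to view $f$ as the direct sum of the linear function $\ell(X_1,\ldots,X_{m+1})=X_1\oplus\cdots\oplus X_{m+1}=\langle\mathbf{1}_{m+1},\mathbf{X}\rangle$ on one block of variables and the Maiorana--McFarland bent function $f'=(\psi,h)\mbox{-}\sym{MM}_{2k}$ on the disjoint block $(\mathbf{U},\mathbf{V})$, and then to read every claimed property off the Walsh spectrum of $f$. First I would record that spectrum. For $\bm{\alpha}\in\mathbb{F}_2^{m+1}$ and $\bm{\beta}\in\mathbb{F}_2^{2k}$, the exponent in the sum defining $W_f(\bm{\alpha},\bm{\beta})$ separates into a part depending only on $\mathbf{x}$ and a part depending only on $(\mathbf{u},\mathbf{v})$, so the sum factors as $\big(\sum_{\mathbf{x}}(-1)^{\langle\mathbf{1}_{m+1}\oplus\bm{\alpha},\mathbf{x}\rangle}\big)\cdot W_{f'}(\bm{\beta})$. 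The first factor is $2^{m+1}$ if $\bm{\alpha}=\mathbf{1}_{m+1}$ and $0$ otherwise; the second factor equals $\pm2^{k}$ for every $\bm{\beta}$ because $f'$ is a $2k$-variable bent function ($\sym{MM}_{2k}$ is bent for any bit permutation $\psi$, a bit permutation being a bijection). Hence $W_f(\bm{\alpha},\bm{\beta})=\pm2^{m+1+k}$ when $\bm{\alpha}=\mathbf{1}_{m+1}$ and $W_f(\bm{\alpha},\bm{\beta})=0$ otherwise.

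From this spectrum the resiliency and the linear bias are immediate. If $\sym{wt}(\bm{\alpha},\bm{\beta})\le m$ then $\bm{\alpha}\neq\mathbf{1}_{m+1}$ (which has weight $m+1$), so $W_f(\bm{\alpha},\bm{\beta})=0$; thus $f$ is $m$-resilient. The largest absolute Walsh value is $2^{m+1+k}$ (attained, e.g., at $\bm{\alpha}=\mathbf{1}_{m+1}$, $\bm{\beta}=\mathbf{0}_{2k}$), so $\sym{LB}(f)=2^{m+1+k}/2^{n}=2^{m+1+k-n}=2^{-k}=2^{-(n-m-1)/2}$, using $n=m+1+2k$.

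For the algebraic immunity, set $h=\sym{Maj}_k$ and work in the regime $k\ge2$ (the bound $\lceil k/2\rceil$ being trivial otherwise, since $\sym{AI}(f)\ge1$ always). A bit permutation is in particular an affine map, so Theorem~\ref{thm-MMMaj-AI} applies to $f'=(\psi,\sym{Maj})\mbox{-}\sym{MM}_{2k}$ and gives $\sym{AI}(f')\ge\sym{AI}(\sym{Maj}_k)=\lceil k/2\rceil$. Since $f=\ell\oplus f'$ is a direct sum on disjoint variable sets, Proposition~\ref{prop-AI-dsum} yields $\sym{AI}(f)\ge\max\{\sym{AI}(\ell),\sym{AI}(f')\}\ge\lceil k/2\rceil=\lceil(n-m-1)/4\rceil$.

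Finally, for the gate count, $X_1\oplus\cdots\oplus X_{m+1}$ uses $m$ XOR gates; $\langle\psi(\mathbf{U}),\mathbf{V}\rangle=\bigoplus_{i=1}^{k}U_{\rho(i)}V_i$ uses $k$ AND gates and $k-1$ XOR gates, with the permutation $\psi=\rho$ realised purely by wiring; $\sym{Maj}_k(\mathbf{U})$ uses $O(k)$ NAND gates by Proposition~\ref{prop-maj-gc}; and assembling $f$ from these three blocks costs two further XOR gates. The total is $O(m+k)=O(n)$, since $n=m+1+2k$. The proof is essentially an assembly of the quoted results, and all the genuine content sits in Theorem~\ref{thm-MMMaj-AI}, which is used as a black box; the only points requiring a moment's care are that a bit permutation is an admissible affine map for that theorem and that $\sym{Maj}_k$ admits a linear-size circuit, both already recorded above. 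I therefore expect no real obstacle beyond being careful with the bookkeeping of variable counts and the identity $k=(n-m-1)/2$.
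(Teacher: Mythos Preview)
Your proof is correct and follows essentially the same route as the paper: both view $f$ as the direct sum of a linear form in $m+1$ variables with the bent function $(\psi,h)\mbox{-}\sym{MM}_{2k}$, derive resiliency and linear bias from the effect of adding linear variables, and obtain the algebraic immunity bound by chaining Proposition~\ref{prop-AI-dsum}, Theorem~\ref{thm-MMMaj-AI}, and Theorem~\ref{thm-maj-prop}. The only difference is cosmetic: you compute the factored Walsh spectrum explicitly, whereas the paper simply invokes the preliminary fact that adding a fresh variable linearly preserves linear bias and increases resiliency by one.
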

\begin{proof}
	Since $m+1$ new variables are added to a bent function, the order of resiliency is $m$. The linear bias of the bent function on $2k$ variables
	is $2^{-k}$, where $k=(n-m-1)/2$, and since the new variables are simply added, the linear bias remains unchanged. From Proposition~\ref{prop-AI-dsum}
	the algebraic immunity of $f$ is at least the algebraic immunity of the bent function; from Theorem~\ref{thm-MMMaj-AI} the algebraic immunity 
	of the bent function is at least the algebraic immunity of $h=\sym{Maj}_k$; and from Theorem~\ref{thm-maj-prop} the algebraic immunity
	of $\sym{Maj}_k$ is equal to $\lceil k/2 \rceil$. From Proposition~\ref{prop-maj-gc}, $\sym{Maj}_k$ can be implemented using $O(k)=O(n)$ gates. The implementation
	of the other components of $f$ also require $O(n)$ gates.
\end{proof}

The special case of balanced functions was considered in~\cite{cryptoeprint:2025/160} and is given in the following result.
\begin{corollary}[\cite{cryptoeprint:2025/160}]\label{cor-bal} 
	Let $n\equiv 1\bmod 2$. It is possible to construct an $n$-variable, balanced function with linear bias equal to $2^{-(n-1)/2}$ (which is almost optimal), 
	algebraic immunity at least $\lceil(n-1)/4 \rceil$, and can be implemented using $O(n)$ gates.
\end{corollary}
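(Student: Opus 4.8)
The plan is to derive Corollary~\ref{cor-bal} as the special case $m=0$ of Theorem~\ref{thm-direct-m>1-even}. First I would check that the hypotheses of the theorem are met: taking $m=0$, the requirement $n\not\equiv m\bmod 2$ becomes $n\equiv 1\bmod 2$, which is exactly the hypothesis of the corollary, and the requirement $n>m$ is immediate since $n\geq 1$. Set $k=(n-m-1)/2=(n-1)/2$, let $\psi$ be any bit permutation of $\{0,1\}^k$ (for instance the identity), and let $h=\sym{Maj}_k$. Then the function $f$ of \eqref{eqn-res-even-gen} specializes to $f(X_1,U_1,\ldots,U_k,V_1,\ldots,V_k)=X_1\oplus(\psi,\sym{Maj})\mbox{-}\sym{MM}_{2k}(\mathbf{U},\mathbf{V})$, an $n$-variable function.

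Next I would simply read off the conclusions. By Theorem~\ref{thm-direct-m>1-even}, $f$ is $0$-resilient — that is, balanced — with linear bias $2^{-(n-m-1)/2}=2^{-(n-1)/2}$, and since $h=\sym{Maj}_k$ the algebraic immunity of $f$ is at least $\lceil(n-m-1)/4\rceil=\lceil(n-1)/4\rceil$, and $f$ admits an implementation with $O(n)$ gates. The only remaining thing to justify is the parenthetical claim that the linear bias $2^{-(n-1)/2}$ is almost optimal. For odd $n$ this is precisely the second displayed observation in the ``Almost optimal linear bias'' paragraph: when $n$ is odd and $\sym{nl}(f)=2^{n-1}-2^{(n-1)/2}$ — equivalently $\sym{LB}(f)=2^{-(n-1)/2}$ — one has $\sym{LB}(f)<2\chi(n)$, and of course $\sym{LB}(f)\geq\chi(n)$ by the covering radius bound, so $f$ has almost optimal linear bias by definition. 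Assembling these observations gives the corollary.

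There is essentially no obstacle here: the corollary is a pure specialization, and all the substantive work — the resiliency count from adding variables, the linear bias computation for Maiorana--McFarland bent functions, the algebraic immunity lower bound via Proposition~\ref{prop-AI-dsum} and Theorem~\ref{thm-MMMaj-AI}, and the gate count via Proposition~\ref{prop-maj-gc} — is already carried out in the proof of Theorem~\ref{thm-direct-m>1-even}. If anything needs care, it is only the bookkeeping that $m=0$ is a legal choice (the theorem allows $m\geq 0$, not merely $m\geq 1$, so the case is genuinely covered) and the one-line verification of almost-optimality, which as noted is immediate from the preliminaries. Hence the proof is a two- or three-sentence invocation of the theorem together with the ``almost optimal linear bias'' discussion.
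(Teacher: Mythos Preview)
Your proposal is correct and takes exactly the same approach as the paper, which simply states that putting $m=0$ in Theorem~\ref{thm-direct-m>1-even} provides the result. Your write-up is in fact more thorough than the paper's one-line proof, as you explicitly verify the parity hypothesis, the bookkeeping for $k$, and the almost-optimality claim.
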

\begin{proof}
Putting $m=0$ in Theorem~\ref{thm-direct-m>1-even} provides the result.
\end{proof}

\paragraph{}
Several papers~\cite{DBLP:conf/asiacrypt/CarletF08,DBLP:journals/tit/WangPKX10,DBLP:journals/tit/Carlet11,DBLP:journals/tit/TangCT13,DBLP:journals/jossac/ShanHZL18,DBLP:journals/dcc/WangS19,DBLP:journals/dam/Hu0H20} proposed constructions of $n$-variable balanced functions which achieve optimal algebraic immunity $\lceil n/2\rceil$
with provable upper bounds on the linear bias. The linear bias of the $n$-variable function constructed in the last of this line of papers, i.e. in~\cite{DBLP:journals/dam/Hu0H20},
is at most $(0.402963+\ln 2/\pi)2^{-(n-2)/2}+\delta$, where $\delta$ is a small (though quite complicated) quantity.
For odd $n$, Corollary~\ref{cor-bal} provides constructions of $n$-variable balanced functions with almost
optimal linear bias of $2^{-(n-1)/2}$ and a lower bound of $\lceil (n-1)/4\rceil$ on algebraic immunity. The linear bias of the functions obtained from the previous constructions
(both the upper bound as well as actual values for concrete functions) are higher than the linear bias of the functions obtained from
Corollary~\ref{cor-bal}. So Corollary~\ref{cor-bal} and the previous constructions provide different points on the nonlinearity/linear bias trade-off curve
for balanced functions. 

The main advantage of the functions constructed using Corollary~\ref{cor-bal} is that these functions can be constructed using
$O(n)$ gates, while all previous constructions are essentially based on discrete logarithm computation and require super-polynomial size circuits.
To see this advantage in concrete terms, we consider the case of the 28-variable function (say $f_{28}$)
reported in Table~2 of~\cite{DBLP:journals/dam/Hu0H20} which has algebraic immunity
14 and nonlinearity 134201460 (equivalently linear bias equal to about $2^{-13.01}$). According to the description of hardware implementation in 
Section~4.2 of~\cite{DBLP:journals/dam/Hu0H20}, implementation of $f_{28}$ will require a look-up table of size $2^{28}$ along with other gates. 
Taking $n=57$ in Corollary~\ref{cor-bal}, we obtain a function (say $f_{57}$) with algebraic immunity 14 and linear bias equal to $2^{-28}$ 
which can be implemented using 217 NAND, 28 XOR, 30 AND, and 1 OR gates (see~\cite{cryptoeprint:2025/160} for the method of obtaining the gate count).
Comparing $f_{28}$ with $f_{57}$, we see that both are balanced and have the same algebraic immunity, $f_{57}$ has a much lower linear bias, and can be implemented much more
efficiently than $f_{28}$. From both security and efficiency points of view, to a designer of cryptosystems $f_{57}$ will be much more preferable than $f_{28}$. 

The special case of 1-resilient function obtained from Theorem~\ref{thm-direct-m>1-even} is given in the following result.
\begin{corollary}\label{cor-1-res-a}
	Let $n\equiv 0\bmod 2$. It is possible to construct an $n$-variable, 1-resilient function with linear bias equal to $2^{-(n-2)/2}$ (which is almost optimal), 
	algebraic immunity at least $\lceil(n-2)/4 \rceil$, and can be implemented using $O(n)$ gates.
\end{corollary}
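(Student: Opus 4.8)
The plan is to obtain this statement directly from Theorem~\ref{thm-direct-m>1-even} by specializing to $m=1$. First I would check that the hypotheses of that theorem are met under the assumption $n\equiv 0\bmod 2$: with $m=1$ the parity requirement $n\not\equiv m\bmod 2$ becomes $n\not\equiv 1\bmod 2$, which is precisely ``$n$ even'', and $n>m$ holds for every even $n\geq 2$ (for the Maiorana--McFarland part to be non-degenerate one wants $k\geq 1$, i.e.\ $n\geq 4$, and for Theorem~\ref{thm-MMMaj-AI} to apply directly one wants $k\geq 2$, i.e.\ $n\geq 6$; I would note this so the algebraic immunity bound is meaningful). Setting $k=(n-m-1)/2=(n-2)/2$, taking $\psi$ to be any bit permutation on $k$ bits and $h=\sym{Maj}_k$, Theorem~\ref{thm-direct-m>1-even} then yields an $n$-variable, $1$-resilient function $f$ with linear bias $2^{-(n-m-1)/2}=2^{-(n-2)/2}$, algebraic immunity at least $\lceil(n-m-1)/4\rceil=\lceil(n-2)/4\rceil$, and an implementation using $O(n)$ gates.

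It then remains to justify the parenthetical claim that this linear bias is almost optimal. For this I would appeal to the computation recorded in the ``Almost optimal linear bias'' paragraph: since $n$ is even, $2\chi(n)=2\lfloor 2^{n/2-1}\rfloor/2^{n-1}=2^{-(n-2)/2}$, so $f$ attains the upper endpoint $\sym{LB}(f)=2\chi(n)$ of the almost-optimal interval $\chi(n)\leq\sym{LB}(f)\leq 2\chi(n)$; hence the linear bias is almost optimal, as asserted.

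There is no real obstacle here: all the substantive work --- the resiliency count from adjoining $m+1$ fresh linear variables, the preservation of the bent linear bias under direct sum, the algebraic immunity bound coming via Proposition~\ref{prop-AI-dsum}, Theorem~\ref{thm-MMMaj-AI} and Theorem~\ref{thm-maj-prop}, and the $O(n)$ gate count via Proposition~\ref{prop-maj-gc} --- is already packaged inside Theorem~\ref{thm-direct-m>1-even}. The only points that need a moment's attention are matching the parity hypothesis $n\not\equiv m\bmod 2$ at $m=1$ to the corollary's hypothesis $n\equiv 0\bmod 2$, and carrying out the trivial arithmetic $(n-1-1)/2=(n-2)/2$ and $(n-1-1)/4=(n-2)/4$ so that the bounds come out in the stated form.
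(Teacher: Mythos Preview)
Your proposal is correct and follows exactly the paper's approach: the paper's entire proof is the single line ``Putting $m=1$ in Theorem~\ref{thm-direct-m>1-even} provides the result.'' Your write-up is just a more detailed version of this specialization, with the added (and welcome) verification of the ``almost optimal'' claim via the definition of $\chi(n)$.
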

\begin{proof}
Putting $m=1$ in Theorem~\ref{thm-direct-m>1-even} provides the result.
\end{proof}

For even $n$, several papers~\cite{DBLP:journals/dam/TuD12,DBLP:conf/cisc/WangLL12,DBLP:journals/ijfcs/TangCT14,DBLP:journals/ijcm/WangZWZW16,DBLP:journals/tit/TangCTZ17} 
have proposed constructions of 1-resilient functions with optimal algebraic immunity $n/2$ and provable upper bounds on linear bias. To the best of our knowledge the 
last of such results is~\cite{DBLP:journals/tit/TangCTZ17} which improves upon works prior to it by providing lower linear bias, and also guarantees the fast algebraic immunity
to be at least $n-6$. The upper bound on the linear bias of $n$-variable 
functions constructed in~\cite{DBLP:journals/tit/TangCTZ17} is $((1+(n/2)\ln 2)/\pi + (\pi+16)/32)2^{-n/2 + 1}+2^{-(n-1)}$, whereas the linear bias of the functions constructed 
in Corollary~\ref{cor-1-res-a} is $2^{-(n-2)/2}$ which is almost optimal, the algebraic immunity is at least $\lceil(n-2)/4 \rceil$, and hence the fast algebraic immunity
is at least $1+\lceil(n-2)/4 \rceil$. 
So for 1-resilient functions, the functions constructed in~\cite{DBLP:journals/tit/TangCTZ17} and those in 
Corollary~\ref{cor-1-res-a} represent two distinct trade-off points with respect to algebraic immunity, fast algebraic immunity and linear bias. 
From an implementation point of view, however, there is a major difference between the constructions in Corollary~\ref{cor-1-res-a} and 
that in~\cite{DBLP:journals/tit/TangCTZ17}. The functions constructed using Corollary~\ref{cor-1-res-a} require a circuit size of $O(n)$ gates, while the construction 
in~\cite{DBLP:journals/tit/TangCTZ17} is based on defining the support of the desired Boolean function using powers of a primitive element over the field
$\mathbb{F}_{2^{n/2}}$; in particular, from the description of Construction~2 in~\cite{DBLP:journals/tit/TangCTZ17} it appears that the only reasonable method
of implementing the function is to use a truth table requiring $\Omega(2^n)$ gates. We provide a concrete example to highlight the difference between the
functions constructed using Corollary~\ref{cor-1-res-a} and Theorem~6 of~\cite{DBLP:journals/tit/TangCTZ17} (which is based on Construction~2 of~\cite{DBLP:journals/tit/TangCTZ17}).
Suppose $n=32$. Theorem~6 of~\cite{DBLP:journals/tit/TangCTZ17} provides a 32-variable, 1-resilient function (say $g_{32}$)
with algebraic immunity 16, fast algebraic immunity at least 26 and nonlinearity at least 2147192232 (equivalently,
linear bias at most about $2^{-12.85}$) which can be implemented using around $2^{32}$ gates. In Corollary~\ref{cor-1-res-a} if we take $n=102$, then we obtain 
a 102-variable, 1-resilient function (say $g_{102}$) with algebraic immunity at least 25, and hence fast algebraic immunity at least 26, and almost optimal linear bias equal 
to $2^{-50}$ which can be implemented using 406 NAND, 52 XOR, 52 AND, and 3 OR gates (see~\cite{cryptoeprint:2025/160} for the method of obtaining the gate count). 
So $g_{102}$ and $g_{32}$ both are 1-resilient and have the same fast algebraic immunity, $g_{102}$ has much higher algebraic immunity and much lower linear bias than $g_{32}$;
and $g_{102}$ can be implemented much more efficiently than $g_{32}$. From both security and efficiency points of view, to a designer of cryptosystesm $g_{102}$ will be much 
more preferable than $g_{28}$. 

\begin{remark}\label{rem-lo-bnd-not-strict}
	Theorem~\ref{thm-direct-m>1-even} guarantees algebraic immunity of the constructed $n$-variable, $m$-resilient function to be \textit{at least} $\lceil (n-m-1)/4\rceil$. 
	For concrete values of $n$, the actual value of algebraic immunity can actually be greater than the lower bound. For example, if we take $n=10$ and $m=1$,
	then the lower bound on algebraic immunity of the 10-variable function constructed using Theorem~\ref{thm-direct-m>1-even} is 2; we constructed the 10-variable
	function and computed its actual algebraic immunity which turns out to be 3.
\end{remark}

Theorem~\ref{thm-direct-m>1-even} covers the case where $n$ and $m$ do not have the same parity. The case where $n$ and $m$ have the same parity 
can also be covered in a similar manner. For such a construction we use a 5-variable function given by the following result.

\begin{proposition}\label{prop-case-n=5}
Define
\begin{eqnarray}\label{eqn-n=5-simple}
	f_5(X_1,X_2,Z_1,Z_2,Z_3) & = & Z_1 \oplus Z_2 \oplus X_1(Z_1\oplus Z_3) \oplus X_2(Z_2\oplus Z_3) \oplus X_1X_2(Z_1\oplus Z_2\oplus Z_3).
\end{eqnarray}
	The function $f_5$ is a 5-variable, 1-resilient function having degree 3, algebraic immunity 2, nonlinearity 12 (and hence almost optimal linear bias equal to $2^{-2}$),
	and can be implemented using 7 XOR and 4 AND gates. 
\end{proposition}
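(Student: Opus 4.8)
The plan is to verify each claimed property of $f_5$ more or less directly, exploiting the small number of variables. First I would rewrite $f_5$ in a form that exposes its structure. Treating $X_1,X_2$ as the "outer" variables and $Z_1,Z_2,Z_3$ as the "inner" variables, for each of the four fixed values of $(X_1,X_2)$ the function $f_5$ restricts to an affine function of $(Z_1,Z_2,Z_3)$; explicitly, the coefficient of $Z_i$ as a linear form in the $Z$'s depends on $(X_1,X_2)$ and the constant term is $0$. Collecting these, $f_5(\mathbf X,\mathbf Z)=\langle \phi(X_1,X_2),(Z_1,Z_2,Z_3)\rangle$ for a suitable map $\phi:\mathbb F_2^2\to\mathbb F_2^3$; reading off the coefficients, $\phi(0,0)=(1,1,0)$, $\phi(1,0)=(0,1,1)$, $\phi(0,1)=(1,0,1)$, $\phi(1,1)=(0,0,1)$ (these should be checked against \eqref{eqn-n=5-simple} by expanding the $X_1X_2$ term). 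Note $\phi$ is injective and none of its four values is $\mathbf 0_3$.

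From this Maiorana–McFarland-like form the resiliency and nonlinearity follow quickly. For the Walsh transform at $\bm\alpha=(\beta_1,\beta_2,\gamma_1,\gamma_2,\gamma_3)$, summing first over $\mathbf Z$ gives $0$ unless $(\gamma_1,\gamma_2,\gamma_3)=\phi(\beta_1,\beta_2)$, in which case the inner sum is $\pm 8$ and the outer sum over $(X_1,X_2)$ contributes $\pm$ a further factor; since $\phi$ is injective, for each $(\beta_1,\beta_2)$ there is exactly one matching $(\gamma_1,\gamma_2,\gamma_3)$, and one computes $W_{f_5}(\bm\alpha)=\pm 8$ for exactly four choices of $\bm\alpha$ and $0$ otherwise. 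Hence $\sym{nl}(f_5)=2^{4}-\tfrac12\cdot 8=12$, giving linear bias $8/32=2^{-2}$, which is almost optimal for $n=5$. For resiliency, the only $\bm\alpha$ with $W_{f_5}(\bm\alpha)\neq 0$ are the four with $(\gamma_1,\gamma_2,\gamma_3)=\phi(\beta_1,\beta_2)$; since each such $\phi(\beta_1,\beta_2)$ has weight at least $1$ in the $\gamma$-coordinates alone, every nonzero Walsh coefficient occurs at a point of weight $\geq 2$, so $f_5$ is $1$-resilient; and since $W_{f_5}(\mathbf 0_5)=0$, it is balanced. The degree claim is read off the ANF \eqref{eqn-n=5-simple}: the term $X_1X_2(Z_1\oplus Z_2\oplus Z_3)$ contributes monomials of degree $3$ and there is no monomial of degree $\geq 4$, so $\sym{deg}(f_5)=3$ (consistent with Siegenthaler's bound \eqref{eqn-s-bound}, which for $n=5$, $m=1$ allows degree up to $3$).

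For the algebraic immunity, since $\sym{AI}(f_5)\le\lceil 5/2\rceil=3$, it suffices to show $\sym{AI}(f_5)\ge 2$, i.e. that there is no nonzero affine $g$ with $gf_5=0$ or $g(f_5\oplus1)=0$. An affine annihilator $g$ of $f_5$ must vanish on $\sym{supp}(f_5)$, so it is enough to check that no nontrivial affine form vanishes on all of $\sym{supp}(f_5)$ and, symmetrically, none vanishes on all of $\sym{supp}(f_5\oplus1)$; equivalently, that each of these two weight-$16$ supports is not contained in any affine hyperplane of $\mathbb F_2^5$. This is a finite check: using the description $\sym{supp}(f_5)=\{(X_1,X_2,\mathbf Z):\langle\phi(X_1,X_2),\mathbf Z\rangle=1\}$, one shows the support spans $\mathbb F_2^5$ affinely, e.g. by exhibiting six points of it whose affine hull is everything, and likewise for the complement. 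I expect this step — ruling out affine annihilators, and more generally confirming $\sym{AI}(f_5)=2$ rather than something larger — to be the only part needing genuine case analysis; it is routine but is the main obstacle to a fully rigorous write-up, and can be settled either by the spanning argument above or simply by a direct computation on the $32$-entry truth table. Finally, the gate count is immediate from \eqref{eqn-n=5-simple}: the three products $X_1(Z_1\oplus Z_3)$, $X_2(Z_2\oplus Z_3)$, $X_1X_2(Z_1\oplus Z_2\oplus Z_3)$ together with forming $X_1X_2$ use $4$ AND gates, and the various XORs — two each to form $Z_1\oplus Z_3$, $Z_2\oplus Z_3$, $Z_1\oplus Z_2\oplus Z_3$ (reusing a partial sum), plus the outer additions $Z_1\oplus Z_2\oplus(\cdots)\oplus(\cdots)\oplus(\cdots)$ — can be arranged to total $7$ XOR gates.
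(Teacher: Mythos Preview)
Your overall approach is the same as the paper's: both decompose $f_5$ as a concatenation of four linear functions in $(Z_1,Z_2,Z_3)$ indexed by $(X_1,X_2)$, and both read off resiliency and nonlinearity from this structure. However, there are a couple of slips and one genuine gap.

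First, two computational slips. Your value $\phi(1,1)=(0,0,1)$ is wrong; substituting $X_1=X_2=1$ in \eqref{eqn-n=5-simple} gives $Z_1\oplus Z_2\oplus Z_3$, so $\phi(1,1)=(1,1,1)$. (This matches the paper's rewriting of $f_5$ and, fortunately, does not damage your resiliency argument since $(1,1,1)$ also has weight $\geq 2$.) Second, the number of $\bm\alpha$ with $W_{f_5}(\bm\alpha)\neq 0$ is $16$, not $4$: for each of the four $\bm\gamma$ in the image of $\phi$, all four choices of $(\beta_1,\beta_2)$ give a nonzero value $\pm 8$. Parseval forces $16\cdot 64=2^{10}$, not $4\cdot 64$. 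This does not affect the nonlinearity conclusion, but the miscount should be fixed.

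The real gap is in the algebraic immunity claim. You write that ``it suffices to show $\sym{AI}(f_5)\geq 2$'', but the proposition asserts $\sym{AI}(f_5)=2$, so you must also show $\sym{AI}(f_5)\leq 2$. Your spanning/hyperplane argument only rules out affine annihilators and hence only gives the lower bound; it says nothing about whether a degree-$2$ annihilator exists. The paper closes this gap by exhibiting an explicit quadratic annihilator of $f_5$. You should do the same: either display such an annihilator (easy to find since $\sym{wt}(f_5)=16$ and the space of quadratics in five variables has dimension $16$, so a nonzero quadratic vanishing on the support is guaranteed by a dimension count once you set up the linear system), or at least state clearly that this upper bound is a separate verification rather than folding it into the ``ruling out affine annihilators'' step.
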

\begin{proof}
	We note that $f_5$ can be written in the following manner.
\begin{eqnarray*}\label{eqn-n=5}
f_5(X_1,X_2,Z_1,Z_2,Z_3) & = & (1\oplus X_1)(1\oplus X_2)\big(Z_1\oplus Z_2\big) \oplus (1\oplus X_1)X_2\big(Z_1\oplus Z_3\big) \nonumber \\
&   & \oplus X_1(1\oplus X_2)\big(Z_2\oplus Z_3\big) \oplus X_1X_2\big(Z_1\oplus Z_2\oplus Z_3\big).
\end{eqnarray*}
This shows that $f_5$ is the concatenation of 4 linear functions each of which is non-degenerate on at least 2 variables. It follows that
$f_5$ is 1-resilient. It is easy to see that the degree of $f_5$ is 3. Further, it is not difficult to verify that the distance of $f_5$ to any affine
is one of the values 12, 16 or 20, and so the nonlinearity of $f_5$ is 12. Hence, $f_5$ has almost optimal linear bias equal to $2^{-2}$. 
Since the degree of $f_5$ is 3, its algebraic immunity is at most 3. It is easy to see that neither $f_5$ nor $1\oplus f_5$ has any annihilator of degree 1.
The following function is an annihilator of $f_5$: $Z_1X_1 \oplus Z_1 + Z_2\oplus X_2 \oplus Z_2 \oplus Z_3\oplus X_1 \oplus Z_3X_2 \oplus X_1X_2 \oplus 1$.
So the algebraic immunity of $f_5$ is 2.  The expression for $f_5$ given by~\eqref{eqn-n=5-simple} can be implemented using 7 XOR and 4 AND gates.
\end{proof}


Using Proposition~\ref{prop-case-n=5}, we provide the construction for the case where $n$ and $m$ have the same parity.
\begin{theorem}\label{thm-direct-m>1-odd}
	Let $m$ be a positive integer, and let $n\geq m+4$ be such that $n\equiv m\bmod 2$. Let $k=(n-m-4)/2$.
	Let $\psi:\{0,1\}^k\rightarrow \{0,1\}^k$ be a bit permutation. Let $f_5$ be the function defined in~\eqref{eqn-n=5-simple}. Define
	\begin{eqnarray}
		\lefteqn{f(Y_1,\ldots,Y_{m-1},X_1,X_2,Z_1,Z_2,Z_3,U_1,\ldots,U_k,V_1,\ldots,V_k)} \nonumber \\
		& = & Y_1\oplus \cdots\oplus Y_{m-1} \oplus f_5(X_1,X_2,Z_1,Z_2,Z_3)\oplus (\psi,h)\mbox{-}\sym{MM}_{2k}(U_1,\ldots,U_k,V_1,\ldots,V_k).
	\end{eqnarray}
	Then $f$ is an $n$-variable, $m$-resilient function with linear bias equal to $2^{-(n-m)/2}$. Further, 
		if $h=\sym{Maj}_k$, then the algebraic immunity of $f$ is at least $\lceil(n-m-4)/4 \rceil$, and $f$ can be implemented using $O(n)$ gates.
\end{theorem}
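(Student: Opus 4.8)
The plan is to mimic the proof of Theorem~\ref{thm-direct-m>1-even}, decomposing $f$ as a direct sum of three functions on disjoint variable sets: the linear function $L(Y_1,\ldots,Y_{m-1}) = Y_1\oplus\cdots\oplus Y_{m-1}$ on $m-1$ variables, the $5$-variable function $f_5$ from Proposition~\ref{prop-case-n=5}, and the bent function $(\psi,\sym{Maj})\mbox{-}\sym{MM}_{2k}$ on $2k = n-m-4$ variables. I would first verify the arithmetic $n = (m-1) + 5 + 2k$, which is exactly the hypothesis $k = (n-m-4)/2$ together with $n\equiv m\bmod 2$ ensuring $k$ is a non-negative integer (and $n\geq m+4$ ensuring $k\geq 0$).

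For resiliency, I would use the standard fact that the direct sum of an $m_1$-resilient and an $m_2$-resilient function is $(m_1+m_2+1)$-resilient. Here $L$ is $(m-2)$-resilient (it is non-degenerate on all $m-1$ of its variables, hence $(m-2)$-resilient by the affine-function characterization; if $m=1$ the $Y$-block is empty and one starts directly from $f_5$), $f_5$ is $1$-resilient by Proposition~\ref{prop-case-n=5}, and the bent function is $0$-resilient (balanced). Combining $L$ with $f_5$ gives $(m-2)+1+1 = m$-resiliency, and then adding the balanced bent function keeps it $m$-resilient. Alternatively, and more directly, one observes $f = (Y_1\oplus\cdots\oplus Y_{m-1}) \oplus g$ where $g = f_5 \oplus (\psi,\sym{Maj})\mbox{-}\sym{MM}_{2k}$ is $1$-resilient, and repeatedly applying the rule "adding a fresh variable increases resiliency by one" (from the displayed remark in the preliminaries about $f(W,\mathbf{X}) = W\oplus g(\mathbf{X})$) lifts $1$-resiliency to $m$-resiliency through the $m-1$ variables $Y_1,\ldots,Y_{m-1}$.

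For the linear bias, the Walsh spectrum of a direct sum is the (scaled) product of the Walsh spectra of the summands: $W_f(\bm\alpha,\bm\beta,\bm\gamma)/2^n = (W_L(\bm\alpha)/2^{m-1})(W_{f_5}(\bm\beta)/2^5)(W_{\sym{MM}}(\bm\gamma)/2^{2k})$. Since $L$ has linear bias $1$ (it is linear), $f_5$ has linear bias $2^{-2}$ by Proposition~\ref{prop-case-n=5}, and the bent function has linear bias $2^{-k}$, the product is $1\cdot 2^{-2}\cdot 2^{-k} = 2^{-(k+2)} = 2^{-(n-m)/2}$, using $k+2 = (n-m-4)/2 + 2 = (n-m)/2$. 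This is almost optimal (indeed equal to $2\chi(n)$ when $n\equiv m\bmod 2$ forces $n$ even, or slightly below it when $n$ odd). For algebraic immunity, I would apply Proposition~\ref{prop-AI-dsum} to get $\sym{AI}(f) \geq \max(\sym{AI}(L), \sym{AI}(f_5), \sym{AI}((\psi,\sym{Maj})\mbox{-}\sym{MM}_{2k}))$, and then bound the last term below by $\sym{AI}(\sym{Maj}_k) = \lceil k/2\rceil = \lceil (n-m-4)/4\rceil$ via Theorem~\ref{thm-MMMaj-AI} and Theorem~\ref{thm-maj-prop} (noting Theorem~\ref{thm-MMMaj-AI} requires $k\geq 2$, i.e.\ $n\geq m+8$; for the small residual cases $k\in\{0,1\}$ the bound $\lceil(n-m-4)/4\rceil \leq 1$ is trivially met since $f_5$ alone gives $\sym{AI}(f)\geq 2 > 1$, or one can simply note the claimed lower bound is vacuous). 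Finally, the gate count: $L$ costs $m-2$ XOR gates, $f_5$ costs $11$ gates by Proposition~\ref{prop-case-n=5}, the bit permutation $\psi$ is free, the inner product $\langle\psi(\mathbf{U}),\mathbf{V}\rangle$ costs $O(k)$ gates, $\sym{Maj}_k$ costs $O(k)$ gates by Proposition~\ref{prop-maj-gc}, and the final XORs assembling the three pieces cost $O(1)$; the total is $O(n)$.

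The main obstacle I anticipate is none of real depth — all the heavy lifting (the algebraic-immunity lower bound for Maiorana–McFarland-with-majority, the majority gate count, the properties of $f_5$) has been packaged into cited results and Proposition~\ref{prop-case-n=5}. The only genuine care points are: (i) handling the boundary cases $m=1$ (empty $Y$-block) and small $k$ cleanly so the statement is literally true, and (ii) making sure the direct-sum resiliency composition is applied correctly, since $f_5$ is only $1$-resilient (not balanced), so one cannot simply cite "adding variables to a bent function" as in Theorem~\ref{thm-direct-m>1-even} — one must route through the $1$-resiliency of $f_5$. Neither of these is a serious difficulty.
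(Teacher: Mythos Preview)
Your approach is essentially identical to the paper's proof: direct-sum decomposition, then invoke Proposition~\ref{prop-case-n=5} for $f_5$, Proposition~\ref{prop-AI-dsum} and Theorem~\ref{thm-MMMaj-AI} for the algebraic-immunity bound, and Proposition~\ref{prop-maj-gc} for the gate count. Your treatment of the edge cases ($m=1$, small $k$) is in fact more careful than the paper's.

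One slip to fix: in your first resiliency argument you assert that the bent function is ``$0$-resilient (balanced)'', but bent functions are never balanced (their weight is $2^{2k-1}\pm 2^{k-1}$), so that route as written does not go through. Your alternative argument is the correct one and is exactly what the paper does: $f_5$ is $1$-resilient, hence the direct sum $f_5\oplus(\psi,\sym{Maj}_k)\mbox{-}\sym{MM}_{2k}$ is $1$-resilient (since the Walsh transform of a direct sum factors as $W_{f_5}(\bm\beta)W_{\sym{MM}}(\bm\gamma)$, and $\sym{wt}(\bm\beta,\bm\gamma)\leq 1$ forces $\sym{wt}(\bm\beta)\leq 1$, whence $W_{f_5}(\bm\beta)=0$), and then XOR-ing in the $m-1$ fresh variables $Y_1,\ldots,Y_{m-1}$ lifts this to $m$-resiliency. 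Just drop the ``bent is balanced'' claim and you are done.
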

\begin{proof}
	The function $f_5$ and $m-1$ new variables are added to the bent function. The function $f_5$ is itself 1-resilient, and adding the $m-1$ new variables increases 
	resiliency to $m$. The linear bias of the bent function on $2k$ variables is $2^{-k}$, where $k=(n-m-4)/2$. The linear bias of $f_5$ is $2^{-2}$, and so the
	linear bias of the sum of $f_5$ and the bent function is $2^{-k-2}$. Addition of the new variables does not change the linear bias. 
	
	From Proposition~\ref{prop-AI-dsum} the algebraic immunity of $f$ is at least the algebraic immunity of the bent function; from Theorem~\ref{thm-MMMaj-AI} the 
	algebraic immunity of the bent function is at least the algebraic immunity of $h=\sym{Maj}_k$; and from Theorem~\ref{thm-maj-prop} the algebraic immunity
	of $\sym{Maj}_k$ is equal to $\lceil k/2 \rceil$. From Proposition~\ref{prop-maj-gc}, $\sym{Maj}_k$ can be implemented using $O(k)=O(n)$ gates. The implementation
	of the other components of $f$ also require $O(n)$ gates.
\end{proof}



\section{Iterated Construction \label{sec-iterate}}
Both Theorems~\ref{thm-direct-m>1-even} and~\ref{thm-direct-m>1-odd} essentially simply add new variables to increase resiliency. This may be considered undesirable.
In this section, we describe a different method for increasing resiliency. To do this, we resurrect an idea of an iterated construction of resilient functions which was only 
briefly sketched in~\cite{PASALIC2001158}. The idea in~\cite{PASALIC2001158} was itself based on a more general theoretical result from~\cite{DBLP:conf/indocrypt/Tarannikov00}. 
The description in~\cite{PASALIC2001158} briefly considered resiliency and nonlinearity, but
not algebraic immunity (in fact, the work~\cite{PASALIC2001158} predates the introduction of the notion of algebraic immunity). 

\begin{construction}\label{cons-iter}
	Let $g$ and $h$ be two $n$-variable functions, and let $f$ be an $(n+1)$-variable function obtained as
	$\sym{Concat}(g,h)$, i.e. $$f(X_1,\ldots,X_{n+1})=(1\oplus X_{n+1})g(X_1,\ldots,X_n) \oplus X_{n+1}h(X_1,\ldots,X_n).$$
	Define $(n+3)$-variable functions $G$ and $H$ as follows.
\begin{eqnarray}
	\lefteqn{G(X_1,\ldots,X_{n+3})} \nonumber \\
	& = & X_{n+3} \oplus X_{n+2} \oplus f(X_1,\ldots,X_{n+1}) \nonumber \\
	& = & X_{n+3} \oplus X_{n+2} \oplus (1\oplus X_{n+1})g(X_1,\ldots,X_n) \oplus X_{n+1}h(X_1,\ldots,X_n), \label{eqn-iterated-G} \\
	\lefteqn{H(X_1,\ldots,X_{n+3})} \nonumber \\
	& = & X_{n+3}\oplus X_{n+1} \oplus (1\oplus X_{n+3} \oplus X_{n+2}) g(X_1,\ldots,X_n) \oplus (X_{n+3}\oplus X_{n+2})h(X_1,\ldots,X_n). \label{eqn-iterated-H}
\end{eqnarray}
	By $\sym{Step}(g,h)$ we denote the pair of functions $(G,H)$ obtained in~\eqref{eqn-iterated-G} and~\eqref{eqn-iterated-H}.
	Define an $(n+4)$-variable function $F$ as $\sym{Concat}(G,H)$, i.e.
	\begin{eqnarray}\label{eqn-F}	
		F(X_1,\ldots,X_{n+4}) & = & (1\oplus X_{n+4})G(X_1,\ldots,X_{n+3}) \oplus X_{n+4}H(X_1,\ldots,X_{n+3}).
	\end{eqnarray}
\end{construction}

The following result relates the properties of $g$ and $h$ to that of $G$, $H$ and $F$.
\begin{theorem}\label{thm-iterated-prop}
	Let $n$ be a positive integer, and $g$ and $h$ be two $n$-variable functions. Let $(G,H)=\sym{Step}(g,h)$ and $F=\sym{Concat}(G,H)$ be constructed as in 
	Construction~\ref{cons-iter}. Then the following holds.
	\begin{compactenum}
		\item If $g$ and $h$ are $m$-resilient, then $G$ and $H$ are $(m+2)$-resilient.
		\item $\sym{nl}(G)=\sym{nl}(H)=4\sym{nl}(f)$, and equivalently $\sym{LB}(G)=\sym{LB}(H)=\sym{LB}(f)$.
		\item Let $\ell(X_1,\ldots,X_{n+3})$ be a linear function. Then either $G\oplus \ell$ or $H\oplus \ell$ is balanced.
		\item Obtaining $G$ and $H$ from $g$ and $h$ requires 8 XOR and 4 AND gates. 
		\item $F$ is $(m+2)$-resilient.
		\item $\sym{nl}(F)=2^{n+2}+\sym{nl}(G)=2^{n+2}+4\sym{nl}(f)$, and equivalently $\sym{LB}(F)=\sym{LB}(G)/2=\sym{LB}(f)/2$.
		\item Obtaining $F$ from $G$ and $H$ requires 2 XOR and 2 AND gates.
	\end{compactenum}
\end{theorem}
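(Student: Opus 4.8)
The plan is to compute the Walsh transforms of $G$, $H$ and $F$ explicitly in terms of the Walsh transform of $f=\sym{Concat}(g,h)$, and then read off all seven assertions from these formulas. Write $\mathbf{x}=(X_1,\ldots,X_n)$ and abbreviate $p=X_{n+1}$, $q=X_{n+2}$, $r=X_{n+3}$. The starting point is the elementary identity $f(\mathbf{x},p)=g(\mathbf{x})\oplus p\,(g\oplus h)(\mathbf{x})$, which gives $W_f(\bm\alpha,\gamma)=W_g(\bm\alpha)+(-1)^{\gamma}W_h(\bm\alpha)$ for $\bm\alpha\in\mathbb{F}_2^n$ and $\gamma\in\mathbb{F}_2$. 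Since $G=r\oplus q\oplus f(\mathbf{x},p)$, summing over the two fresh linear variables $q$ and $r$ contributes a factor $2$ each, so (with $[\cdot]$ denoting the indicator of the bracketed condition)
\[
W_G(\bm\alpha,\alpha_p,\alpha_q,\alpha_r)=4\,[\alpha_q=1]\,[\alpha_r=1]\,W_f(\bm\alpha,\alpha_p).
\]
For $H$ the trick is to rewrite~\eqref{eqn-iterated-H} as $H=r\oplus p\oplus f(\mathbf{x},q\oplus r)$, using $(1\oplus q\oplus r)g\oplus(q\oplus r)h=f(\mathbf{x},q\oplus r)$; then summing over $p$ forces $\alpha_p=1$, and after the substitution $u=q\oplus r$ summing over $r$ forces $\alpha_q\oplus\alpha_r=1$, yielding
\[
W_H(\bm\alpha,\alpha_p,\alpha_q,\alpha_r)=4\,[\alpha_p=1]\,[\alpha_q\oplus\alpha_r=1]\,W_f(\bm\alpha,\alpha_q).
\]
Finally $F=\sym{Concat}(G,H)$ gives $W_F(\bm\gamma,\gamma_{n+4})=W_G(\bm\gamma)+(-1)^{\gamma_{n+4}}W_H(\bm\gamma)$ for $\bm\gamma\in\mathbb{F}_2^{n+3}$.

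The single structural fact that drives all the statements is that the supports of $W_G$ and $W_H$, viewed as functions on $\mathbb{F}_2^{n+3}$, are disjoint: $W_G$ is nonzero only when $\alpha_q=\alpha_r=1$, whereas $W_H$ is nonzero only when $\alpha_q\neq\alpha_r$. From here: for item~1, if $g,h$ are $m$-resilient then $W_f(\bm\alpha,\cdot)=0$ whenever $\sym{wt}(\bm\alpha)\le m$, so a nonzero $W_G$ or $W_H$ forces $\sym{wt}(\bm\alpha)\ge m+1$ together with two further ones among $\alpha_p,\alpha_q,\alpha_r$, i.e. total Hamming weight $\ge m+3$; hence $G$ and $H$ are $(m+2)$-resilient, and then $W_F(\bm\gamma,\gamma_{n+4})=W_G(\bm\gamma)+(-1)^{\gamma_{n+4}}W_H(\bm\gamma)=0$ whenever $\sym{wt}(\bm\gamma)\le m+2$, which is item~5. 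For item~3, a linear function $\ell$ corresponds to some $\bm\gamma$, and $G\oplus\ell$ (resp.\ $H\oplus\ell$) is balanced iff $W_G(\bm\gamma)=0$ (resp.\ $W_H(\bm\gamma)=0$); disjointness of supports makes at least one of these hold. For items~2 and~6, the two displayed formulas give $\max|W_G|=\max|W_H|=4\max|W_f|$, and disjointness of supports gives $\max_{\bm\gamma,\gamma_{n+4}}|W_F|=\max_{\bm\gamma}\bigl(|W_G(\bm\gamma)|+|W_H(\bm\gamma)|\bigr)=4\max|W_f|$; translating through $\sym{nl}(\cdot)=2^{N-1}-\frac12\max|W|$ and $\sym{LB}(\cdot)=\max|W|/2^{N}$ (where $N$ is the number of variables) yields $\sym{nl}(G)=\sym{nl}(H)=4\sym{nl}(f)$, $\sym{LB}(G)=\sym{LB}(H)=\sym{LB}(f)$, $\sym{nl}(F)=2^{n+2}+4\sym{nl}(f)=2^{n+2}+\sym{nl}(G)$, and $\sym{LB}(F)=\sym{LB}(f)/2=\sym{LB}(G)/2$. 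Items~4 and~7 are a direct reading of~\eqref{eqn-iterated-G},~\eqref{eqn-iterated-H} and~\eqref{eqn-F}: computing $g\oplus h$ once and reusing it, each remaining bit operation in those formulas is a single XOR, AND (or trivial complementation) gate, and counting them gives the stated totals.

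The part that needs genuine care is the Walsh transform of $H$. It is tempting to note that the invertible change of variable $q\mapsto q\oplus r$ turns $H$ into $f$ with two fresh linear variables adjoined and conclude that $H$ ``inherits'' $f$'s properties; but while this argument is legitimate for nonlinearity (which the excerpt records is invariant under invertible linear changes of coordinates), it is \emph{not} legitimate for resiliency or for the balancedness statement in item~3, because those are not preserved by an arbitrary invertible linear change of coordinates — the transformation can raise a low Hamming weight by one. Consequently items~1, 3 and~5 genuinely require the explicit closed form for $W_H$ together with the Hamming-weight bookkeeping above; everything else follows routinely once the two displayed Walsh formulas and the disjoint-support observation are in hand.
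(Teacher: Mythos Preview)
Your proof is correct and takes a genuinely different route from the paper. The paper argues the items one at a time and by different devices: for $G$ it uses the observation that $G$ is $f$ with two fresh linear variables adjoined (giving both $(m+2)$-resiliency and $\sym{nl}(G)=4\sym{nl}(f)$ at once); for $\sym{nl}(H)$ it exhibits an explicit invertible linear map on $(X_{n+1},X_{n+2},X_{n+3})$ carrying $G$ to $H$; for the $(m+2)$-resiliency of $H$ it decomposes $H=X_{n+1}\oplus H'$ and proves $H'$ is $(m+1)$-resilient by a direct weight computation with a case split on whether $\ell'$ is degenerate on $X_{n+2},X_{n+3}$; item~3 is a four-way case analysis on which of $X_{n+1},X_{n+2},X_{n+3}$ the linear function $\ell$ is degenerate on; and item~6 is then deduced from item~3 by computing $\sym{wt}(F\oplus\mu)$. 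By contrast, you compute $W_G$ and $W_H$ in closed form and isolate the single structural fact that their supports in $\mathbb{F}_2^{n+3}$ are disjoint (one lives on $\alpha_q=\alpha_r=1$, the other on $\alpha_q\neq\alpha_r$); every assertion then falls out uniformly from these two formulas plus the concatenation identity for $W_F$. Your approach is more compact and conceptually cleaner, replacing several ad hoc case analyses by one spectral observation, while the paper's argument stays closer to the combinatorics and never manipulates Walsh sums explicitly. Your closing caution about the linear-change-of-variables pitfall is apt and is precisely why the paper, too, invokes that transformation only for nonlinearity and supplies a separate direct argument for the resiliency of $H$.
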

\begin{proof}
	If $g$ and $h$ are $m$-resilient, then so is $f$. Note that $G$ is obtained by adding two new variables to $f$. It then follows that $G$ is $(m+2)$-resilient
	and futher $\sym{nl}(G)=4\sym{nl}(f)$. The function $H$ is obtained from the function $G$ by the following invertible linear transformation on the
	variables: $X_{n+1}\rightarrow X_{n+3}\oplus X_{n+2}$, $X_{n+2}\rightarrow X_{n+1}$, $X_{n+3}\rightarrow X_{n+3}$. 
	Since nonlinearity is invariant under an invertible linear transformation on the variables, it follows that $\sym{nl}(H)=\sym{nl}(G)$.

	The function $H(X_1,\ldots,X_{n+3})$ can be written as 
	$H(X_1,\ldots,X_{n+3})=X_{n+1}\oplus H^\prime(X_1,\allowbreak \ldots,\allowbreak X_{n},\allowbreak X_{n+2},\allowbreak X_{n+3})$, where
\begin{eqnarray*}
	\lefteqn{H^\prime(X_1,\ldots,X_{n},X_{n+2},X_{n+3})} \nonumber \\
	& = & (1\oplus X_{n+3})(1\oplus X_{n+2})g(X_1,\ldots,X_n) \oplus (1\oplus X_{n+3})X_{n+2}h(X_1,\ldots,X_n) \nonumber \\
	&  & \oplus X_{n+3}(1\oplus X_{n+2})(1\oplus h(X_1,\ldots,X_n)) \oplus X_{n+3}X_{n+2}(1\oplus g(X_1,\ldots,X_n)).
\end{eqnarray*}
	Since $H$ is obtained from $H^\prime$ by adding a new variable, it follows that
	to show $H$ is $(m+2)$-resilient, it is sufficient to show $H^\prime$ is $(m+1)$-resilient. To show that $H^\prime$ is $(m+1)$-resilient,
	it is sufficient to show that $H^\prime\oplus \ell^\prime$ is balanced for all $(n+2)$-variable linear functions 
	$\ell^\prime(X_1,\ldots,X_n,X_{n+2},X_{n+3})$ which are non-degenerate on at most $m+1$ variables. Let $\ell^\prime$ be any such linear function.
	Write
\begin{eqnarray*}
	\lefteqn{\ell^\prime(X_1,\ldots,X_n,X_{n+2},X_{n+3})} \nonumber \\
	& = & (1\oplus X_{n+3})(1\oplus X_{n+2})\ell_1(X_1,\ldots,X_n) \oplus (1\oplus X_{n+3})X_{n+2}\ell_2(X_1,\ldots,X_n) \nonumber \\
	& & \oplus X_{n+3}(1\oplus X_{n+2})\ell_3(X_1,\ldots,X_n) \oplus X_{n+3}X_{n+2} \ell_4(X_1,\ldots,X_n),
\end{eqnarray*}
where $\ell_1,\ell_2,\ell_3,\ell_4$ are linear functions. Note that for $1\leq i<j\leq 4$, either $\ell_i=\ell_j$ or $\ell_i=1\oplus \ell_j$. We have 
	\begin{eqnarray} \label{eqn-wt}
		\sym{wt}(H^\prime \oplus \ell^\prime) 
		& = & \sym{wt}(g\oplus \ell_1) + \sym{wt}(h\oplus \ell_2) + (2^n-\sym{wt}(h\oplus \ell_3)) + (2^n-\sym{wt}(g\oplus \ell_4)).
	\end{eqnarray}
	Suppose that $\ell^\prime$ is degenerate on both $X_{n+2}$ and $X_{n+3}$. In this case, all the $\ell_i$'s are equal, and so from~\eqref{eqn-wt}
	it follows that $\sym{wt}(H^\prime \oplus \ell)=2^{n+1}$. Next suppose that $\ell$ is non-degenerate on at least one of $X_{n+2}$ or $X_{n+3}$. 
	In this case, each of the $\ell_i$'s is non-degenerate
	on at most $m$ variables. Since both $g$ and $h$ are $m$-resilient, it follows that $\sym{wt}(g\oplus \ell_1)$, $\sym{wt}(h\oplus \ell_2)$,
	$\sym{wt}(h\oplus \ell_3)$ and $\sym{wt}(g\oplus \ell_4)$ are all equal to $2^{n-1}$, and so $\sym{wt}(H^\prime \oplus \ell)=2^{n+1}$.
	This shows that $H^\prime$ is $(m+1)$-resilient and hence $H$ is $(m+2)$-resilient.

	Now we consider the proof of the third point. Suppose the linear function $\ell(X_1,\ldots,X_{n+3})$ is degenerate $X_{n+3}$.
	Then $G\oplus \ell$ can be written as $X_{n+3}$ plus a function which does not involve $X_{n+3}$, and hence $G\oplus \ell$ is balanced.
	Similarly, if $\ell$ is degenerate on $X_{n+2}$, then $G\oplus \ell$ can be written as $X_{n+2}$ plus a function which does not
	involve $X_{n+2}$, and hence $G\oplus \ell$ is balanced. Further, if $\ell$ is degenerate on $X_{n+1}$, then $H\oplus \ell$ can
	be written as $X_{n+1}$ plus a function which does not involve $X_{n+1}$, and hence $H\oplus \ell$ is balanced. So suppose
	$\ell$ is non-degenerate on all three of the variables $X_{n+1},X_{n+2}$ and $X_{n+3}$, i.e. 
	$\ell(X_1,\ldots,X_{n+3})=X_{n+3}\oplus X_{n+2}\oplus X_{n+1}\oplus \sigma(X_1,\ldots,X_n)$, where $\sigma$ is a linear function.  
	We argue that $H\oplus \ell$ is balanced. 
	From the definition of $H$, we have $H\oplus \ell=X_{n+2}\oplus (1\oplus X_{n+3} \oplus X_{n+2})g \oplus (X_{n+3}\oplus X_{n+2})h \oplus \sigma$
	(which is degenerate on $X_{n+1}$). Considering the four possible values of $X_{n+3}$ and $X_{n+2}$, the sub-functions of $H\oplus \ell$
	that are obtained are $g\oplus \sigma$ (corresponding to $X_{n+3}=0$, $X_{n+2}=0$), $1\oplus h\oplus \sigma$ (corresponding to
	$X_{n+3}=0$, $X_{n+2}=1$), $h\oplus \sigma$ (corresponding to $X_{n+3}=1$, $X_{n+2}=0$), and $1\oplus g\oplus \sigma$ (corresponding to $X_{n+3}=1$, $X_{n+2}=1$).
	So $\sym{wt}(H\oplus \ell)=2(\sym{wt}(g\oplus \sigma)+\sym{wt}(1\oplus h\oplus \sigma)+\sym{wt}( h\oplus \sigma)+\sym{wt}(1\oplus g\oplus \sigma))=2^{n+2},$
	where the factor of $2$ arises from the variable $X_{n+1}$ on which $H\oplus \ell$ is degenerate. Hence, $H\oplus \ell$ is balanced.

	The counts of the XOR and AND gates are clear from~\eqref{eqn-iterated-G} and~\eqref{eqn-iterated-H}.

	Next we provide the arguments for the properties of $F$. 
	Since $H$ and $G$ are $(m+2)$-resilient, it follows that $F$ is also $(m+2)$-resilient. Let $\mu(X_1,\ldots,X_{n+4})$ be an affine
	function on $(n+4)$ variables. Then $\mu$ can be written as 
	$\mu=c\oplus dX_{n+4}\oplus \ell(X_1,\ldots,X_{n+3})$, where $c,d\in \mathbb{F}_2$, and $\ell$ is a linear function on $n+3$ variables.
	Then $\sym{wt}(F\oplus \mu)=\sym{wt}(c\oplus G\oplus \ell) + \sym{wt}(c\oplus d \oplus H\oplus \ell)$. 
	From the third point of the theorem, we have that either $G\oplus \ell$, or $H\oplus \ell$ is balanced. Suppose $G\oplus \ell$ is balanced.
	Then $\sym{wt}(F\oplus \mu)=2^{n+2}+ \sym{wt}(c\oplus d \oplus H\oplus \ell)\geq 2^{n+2}+\sym{nl}(H)$, where the equality is achieved for $c,d$ and 
	$\ell$ such that $\sym{nl}(H)=\sym{wt}(c\oplus d \oplus H\oplus \ell)$. Similarly, if $H\oplus \ell$ is balanced, then
	$\sym{wt}(F\oplus \mu)=2^{n+2}+ \sym{wt}(c\oplus G\oplus \ell)$, where the equality is achieved for $c$ and $\ell$ such that
	$\sym{nl}(G)=\sym{wt}(c \oplus G\oplus \ell)$. From these two cases, the statement on the nonlinearity of $F$ follows.
\end{proof}

To obtain a lower bound on the algebraic immunities of $G$ and $H$ obtained from Construction~1, we first prove the following general result.

\begin{proposition}\label{prop-sub-function}
Let $n$, $n_1$ and $n_2$ be positive integers with $n=n_1+n_2$, and let $f(X_1,\ldots,X_{n_1},Y_1,\ldots,Y_{n_2})$ be an $n$-variable function. We write
\begin{eqnarray} \label{eqn-decompose}
	f(X_1,\ldots,X_{n_1},Y_1,\ldots,Y_{n_2})
        & = & \bigoplus_{\bm{\alpha}=(\alpha_1,\ldots,\alpha_{n_2})} (1\oplus \alpha_1\oplus Y_1)\cdots (1\oplus \alpha_{n_2}\oplus Y_{n_2}) f_{\bm{\alpha}}(X_1,\ldots,X_{n_1}),
\end{eqnarray}
where for $\bm{\alpha}\in\mathbb{F}_2^{n_2}$, 
	$f_{\bm{\alpha}}(X_1,\ldots,X_{n_1})=f(X_1,\ldots,X_{n_1},\alpha_1,\ldots,\alpha_{n_2}).$
Then $\sym{AI}(f) \geq  \displaystyle \min_{\bm{\alpha}\in\mathbb{F}_2^{n_2}} \sym{AI}(f_{\bm{\alpha}}).$
\end{proposition}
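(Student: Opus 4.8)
The plan is a direct restriction argument: an annihilator of $f$ of minimal degree, once the $Y$-variables are fixed to a suitable constant $\bm{\alpha}$, becomes a nonzero annihilator of $f_{\bm{\alpha}}$ of no larger degree.

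First I would set $d=\sym{AI}(f)$ and pick a nonzero $(n)$-variable function $g$ with $\sym{deg}(g)=d$ such that either $gf=0$ or $g(f\oplus 1)=0$. For each $\bm{\alpha}=(\alpha_1,\ldots,\alpha_{n_2})\in\mathbb{F}_2^{n_2}$, define the restriction $g_{\bm{\alpha}}(X_1,\ldots,X_{n_1})=g(X_1,\ldots,X_{n_1},\alpha_1,\ldots,\alpha_{n_2})$, exactly as $f_{\bm{\alpha}}$ is defined from $f$. Two elementary observations drive the proof. (i) Substituting constants for the $Y$-variables in the ANF cannot increase the degree, since every monomial $\mathbf{X}^{\bm{\beta}}\mathbf{Y}^{\bm{\gamma}}$ either survives as $\mathbf{X}^{\bm{\beta}}$ times a constant or vanishes; hence $\sym{deg}(g_{\bm{\alpha}})\leq \sym{deg}(g)=d$ for every $\bm{\alpha}$. (ii) The restriction map $p(\mathbf{X},\mathbf{Y})\mapsto p(\mathbf{X},\bm{\alpha})$ is a ring homomorphism, so it respects products: $(gf)_{\bm{\alpha}}=g_{\bm{\alpha}}f_{\bm{\alpha}}$ and $(g(f\oplus 1))_{\bm{\alpha}}=g_{\bm{\alpha}}(f_{\bm{\alpha}}\oplus 1)$.

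Next, since $g\neq 0$, there is a point $(\mathbf{x}^\ast,\bm{\alpha}^\ast)\in\mathbb{F}_2^{n_1}\times\mathbb{F}_2^{n_2}$ with $g(\mathbf{x}^\ast,\bm{\alpha}^\ast)=1$, which forces $g_{\bm{\alpha}^\ast}(\mathbf{x}^\ast)=1$ and in particular $g_{\bm{\alpha}^\ast}\neq 0$. Now apply the restriction at $\bm{\alpha}^\ast$ to the annihilation identity: if $gf=0$, then $g_{\bm{\alpha}^\ast}f_{\bm{\alpha}^\ast}=0$, and if $g(f\oplus 1)=0$, then $g_{\bm{\alpha}^\ast}(f_{\bm{\alpha}^\ast}\oplus 1)=0$. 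In either case $g_{\bm{\alpha}^\ast}$ is a nonzero annihilator of $f_{\bm{\alpha}^\ast}$ or of $f_{\bm{\alpha}^\ast}\oplus 1$, so by definition $\sym{AI}(f_{\bm{\alpha}^\ast})\leq \sym{deg}(g_{\bm{\alpha}^\ast})\leq d=\sym{AI}(f)$. Since $\sym{AI}(f_{\bm{\alpha}^\ast})\geq \min_{\bm{\alpha}\in\mathbb{F}_2^{n_2}}\sym{AI}(f_{\bm{\alpha}})$, the claimed bound $\sym{AI}(f)\geq \min_{\bm{\alpha}}\sym{AI}(f_{\bm{\alpha}})$ follows.

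I do not anticipate a genuine obstacle here; the only points requiring a little care are the two elementary facts in the first paragraph (degree non-increase and multiplicativity of restriction in the quotient ring $\mathbb{F}_2[X_1,\ldots,X_{n}]/(X_1^2\oplus X_1,\ldots)$), and the remark that $g\neq 0$ guarantees some restriction $g_{\bm{\alpha}^\ast}$ is itself nonzero, which is what lets us land inside the defining set for $\sym{AI}(f_{\bm{\alpha}^\ast})$. The decomposition~\eqref{eqn-decompose} in the statement is not needed beyond recording that $f_{\bm{\alpha}}$ is precisely the restriction of $f$ obtained by fixing $\mathbf{Y}=\bm{\alpha}$.
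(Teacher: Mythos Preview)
Your proposal is correct and follows essentially the same restriction argument as the paper: pick a nonzero annihilator $g$ of $f$ (or of $1\oplus f$) of minimal degree, observe that the restrictions $g_{\bm{\alpha}}$ annihilate the corresponding $f_{\bm{\alpha}}$ (or $1\oplus f_{\bm{\alpha}}$), choose $\bm{\alpha}^\ast$ with $g_{\bm{\alpha}^\ast}\neq 0$, and compare degrees. The paper phrases the passage to restrictions via the decomposition~\eqref{eqn-decompose} applied to $g$, while you invoke the ring-homomorphism property of specialization directly, but the content is identical.
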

\begin{proof}
Suppose $g$ is a non-zero annihilator of $f$. We write
	\begin{eqnarray*}
       g(X_1,\ldots,X_{n_1},Y_1,\ldots,Y_{n_2})
        & = & \bigoplus_{\bm{\alpha}=(\alpha_1,\ldots,\alpha_{n_2})} (1\oplus \alpha_1\oplus Y_1)\cdots (1\oplus \alpha_{n_2}\oplus Y_{n_2}) g_{\bm{\alpha}}(X_1,\ldots,X_{n_1}),
	\end{eqnarray*}
	where $g_{\bm{\alpha}}(X_1,\ldots,X_{n_1})=g(X_1,\ldots,X_{n_1},\alpha_1,\ldots,\alpha_{n_2})$.

	Since $gf=0$, it follows that $g_{\bm{\alpha}}f_{\bm{\alpha}}=0$ for all $\bm{\alpha}\in\mathbb{F}_2^{n_2}$. Further, since $g\neq 0$, there
	must be some $\bm{\alpha}$ such that $g_{\bm{\alpha}}\neq 0$. Then $g_{\bm{\alpha}}$ is a non-zero annihilator of $f_{\bm{\alpha}}$ and so
	$\sym{deg}(g_{\bm{\alpha}})\geq \sym{AI}(f_{\bm{\alpha}})$. Clearly, $\sym{deg}(g) \geq \sym{deg}(g_{\bm{\alpha}})$, and so 
	$\sym{deg}(g)\geq \sym{AI}(f_{\bm{\alpha}})$. 

	A similar reasoning shows that if $g$ is a non-zero annihilator of $1\oplus f$, then $\sym{deg}(g)\geq \sym{AI}(f_{\bm{\beta}})$ for
	some $\bm{\beta}\in\mathbb{F}_2^{n_2}$.

	Since $\sym{AI}(f)$ is the minimum of the degrees of all the non-zero annihilators of $f$, and the degrees of all the non-zero annihilators of $1\oplus f$, the
	result follows.
\end{proof}

The lower bound on the algebraic immunity of a direct sum which was obtained in~\cite{DBLP:conf/eurocrypt/MeauxJSC16} and is stated in Proposition~\ref{prop-AI-dsum}
can be seen as a corollary of Proposition~\ref{prop-sub-function}.
The idea is that the sub-functions of $f$ obtained by fixing $X_1,\ldots,X_{n_1}$ to arbitrary values are either $h(Y_1,\allowbreak \ldots,\allowbreak Y_{n_2})$ or
$1\oplus h(Y_1,\allowbreak \ldots,\allowbreak Y_{n_2})$. So from Proposition~\ref{prop-sub-function}, we have $\sym{AI}(f)\geq \sym{AI}(h)$. Similarly, by fixing 
$Y_1,\ldots,Y_{n_2}$ to arbitrary values, we obtain $\sym{AI}(f)\geq \sym{AI}(g)$. 

%

\begin{theorem}\label{thm-ai-lbnd-cons-iter}
	Let $G$, $H$ and $F$ be the functions constructed as in Construction~\ref{cons-iter}. Then
	\begin{enumerate}
		\item $\sym{AI}(G), \sym{AI}(H)\geq \min\{\sym{AI}(g),\sym{AI}(h)\}$.
		\item $\sym{AI}(F)\geq \min\{\sym{AI}(G), \sym{AI}(H)\}\geq \min\{\sym{AI}(g),\sym{AI}(h)\}$.
	\end{enumerate}
\end{theorem}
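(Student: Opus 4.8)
The plan is to obtain all three inequalities as direct applications of Proposition~\ref{prop-sub-function}, each time splitting the variables into the ``old'' block carrying $g$ and $h$ and the ``new'' block. Before doing so I would record the elementary observation that algebraic immunity is invariant under complementation: the definition of $\sym{AI}(\phi)$ treats annihilators of $\phi$ and of $1\oplus\phi$ on an equal footing, so $\sym{AI}(\phi)=\sym{AI}(1\oplus\phi)$ for every Boolean function $\phi$.

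First I would handle $G$. Applying Proposition~\ref{prop-sub-function} with $n_1=n$, $n_2=3$, the first block being $X_1,\ldots,X_n$ and the second being $X_{n+1},X_{n+2},X_{n+3}$, the relevant subfunctions are $G_{\bm\alpha}(X_1,\ldots,X_n)=G(X_1,\ldots,X_n,\alpha_1,\alpha_2,\alpha_3)$ for $\bm\alpha=(\alpha_1,\alpha_2,\alpha_3)\in\mathbb{F}_2^3$. Reading off~\eqref{eqn-iterated-G}, $G_{\bm\alpha}=\alpha_3\oplus\alpha_2\oplus(1\oplus\alpha_1)g\oplus\alpha_1 h$, so $G_{\bm\alpha}$ is one of $g,\,1\oplus g,\,h,\,1\oplus h$ according to the values of $\alpha_1$ and $\alpha_2\oplus\alpha_3$. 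By the complementation invariance, $\sym{AI}(G_{\bm\alpha})\in\{\sym{AI}(g),\sym{AI}(h)\}$ for each $\bm\alpha$, and Proposition~\ref{prop-sub-function} gives $\sym{AI}(G)\geq\min_{\bm\alpha}\sym{AI}(G_{\bm\alpha})\geq\min\{\sym{AI}(g),\sym{AI}(h)\}$. The function $H$ is treated identically: from~\eqref{eqn-iterated-H}, the subfunction obtained by fixing $(X_{n+1},X_{n+2},X_{n+3})=(\alpha_1,\alpha_2,\alpha_3)$ equals $\alpha_3\oplus\alpha_1\oplus(1\oplus\alpha_3\oplus\alpha_2)g\oplus(\alpha_3\oplus\alpha_2)h$, which is $g$ or $1\oplus g$ when $\alpha_2=\alpha_3$ and $h$ or $1\oplus h$ when $\alpha_2\neq\alpha_3$; hence $\sym{AI}(H)\geq\min\{\sym{AI}(g),\sym{AI}(h)\}$ as well.

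Finally, for $F=\sym{Concat}(G,H)$ I would apply Proposition~\ref{prop-sub-function} with $n_1=n+3$, $n_2=1$, splitting off only the variable $X_{n+4}$; by~\eqref{eqn-F} the two subfunctions are exactly $G$ and $H$, so $\sym{AI}(F)\geq\min\{\sym{AI}(G),\sym{AI}(H)\}$, and chaining with the previous paragraph yields $\sym{AI}(F)\geq\min\{\sym{AI}(g),\sym{AI}(h)\}$. I do not anticipate a genuine obstacle here: the entire argument is carried by Proposition~\ref{prop-sub-function}, and the only point requiring a little care is to verify that the decomposition used in that proposition is precisely ``substitute all constant values for the last $n_2$ variables,'' so that the restricted functions computed above are exactly the $f_{\bm\alpha}$ appearing in its statement.
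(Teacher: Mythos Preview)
Your proposal is correct and follows essentially the same approach as the paper: fix the new variables $X_{n+1},X_{n+2},X_{n+3}$ (respectively $X_{n+4}$) to obtain subfunctions among $g,\,1\oplus g,\,h,\,1\oplus h$ (respectively $G,H$), invoke complementation-invariance of algebraic immunity, and apply Proposition~\ref{prop-sub-function}. Your write-up is actually a bit more explicit than the paper's in spelling out the subfunctions case by case.
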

\begin{proof}
	By setting the variables $X_{n+3},X_{n+2},X_{n+1}$ to arbitrary values, the sub-functions of $G$ that are obtained are 
	$g$, $1\oplus g$, $h$, and $1\oplus h$. Since $g$ and $1\oplus g$ have the same algebraic immunity, and $h$ and $1\oplus h$ also have the
	same algebraic immunity, the lower bound on $\sym{AI}(G)$ follows from Proposition~\ref{prop-sub-function}. The lower bounds on 
	$\sym{AI}(H)$ and $\sym{AI}(F)$ follow in a similar manner using Proposition~\ref{prop-sub-function}.
\end{proof}

Construction~\ref{cons-iter} can be iterated to obtain functions on progressively larger number of variables. The following construction describes the idea.

\begin{construction}\label{cons-iter-grow}
Let $n$ and $t$ be positive integers, and let $g$ and $h$ be two $n$-variable functions. Consider the following iterated construction.
	\begin{tabbing}
		\ \ \ \ \=\ \ \ \ \=\ \ \ \ \kill
		\> $g^{(0)}\leftarrow g$; $h^{(0)}\leftarrow h$; $f^{(0)}\leftarrow \sym{Concat}(g^{(0)},h^{(0)})$; \\
		\> for $i\leftarrow 1$ to $t$ do \\
		\> \> $(g^{(i)},h^{(i)})\leftarrow \sym{Step}(g^{(i-1)},h^{(i-1)})$; $f^{(i)}\leftarrow \sym{Concat}(g^{(i)},h^{(i)})$; \\
		\> end for; \\
		\> return $f^{(t)}$.
	\end{tabbing}
	We denote the function $f^{(t)}$ by $\sym{Iter}_t(g,h)$.
\end{construction}

The properties of the function constructed using Construction~\ref{cons-iter-grow} are given in the following result.
\begin{theorem}\label{thm-iter-grow}
	Let $n$ and $t$ be positive integers, $g$ and $h$ be $n$-variable functions. Let $f=\sym{Concat}(g,h)$, and $f^{(t)}=\sym{Iter}_t(g,h)$. Then the following holds.
	\begin{compactenum}
	\item The function $f^{(t)}$ is an $(n+3t+1)$-variable function.
	\item If $g$ and $h$ are $m$-resilient, then $f^{(t)}$ is $(m+2t)$-resilient.
	\item $\sym{LB}(f^{(t)})=2^{-t}\cdot \sym{LB}(f)$.
	\item $\sym{AI}(f^{(t)})\geq \min\{\sym{AI}(g),\sym{AI}(h)\}$.
	\item Obtaining $f^{(t)}$ from $g$ and $h$ requires $8t+2$ XOR and $4t+2$ AND gates.
	\end{compactenum}
\end{theorem}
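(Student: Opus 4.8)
The plan is to prove each of the five claims of Theorem~\ref{thm-iter-grow} by induction on $t$, using Theorems~\ref{thm-iterated-prop} and~\ref{thm-ai-lbnd-cons-iter} as the inductive step, so most of the work has already been done in the preceding results. The base case $t=0$ is immediate: $f^{(0)}=\sym{Concat}(g,h)=f$, it has $n+1=n+3\cdot 0+1$ variables, its resiliency and linear bias are those of $f$, and $\sym{AI}(f)\geq\min\{\sym{AI}(g),\sym{AI}(h)\}$ is exactly Proposition~\ref{prop-AI-dsum} applied through the concatenation (or more directly, the observation following Proposition~\ref{prop-sub-function}), while $0$ gates are needed. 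For the inductive step, I would fix $i\geq 1$, assume the statement holds for $t=i-1$ applied to the pair $(g,h)$, and then invoke the single-step result.

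The key point is to set up the induction correctly. At stage $i-1$ we have the pair $(g^{(i-1)},h^{(i-1)})$ on $n+3(i-1)$ variables with $f^{(i-1)}=\sym{Concat}(g^{(i-1)},h^{(i-1)})$ on $n+3(i-1)+1$ variables. Applying $\sym{Step}$ produces $(g^{(i)},h^{(i)})=\sym{Step}(g^{(i-1)},h^{(i-1)})$ on $(n+3(i-1))+3 = n+3i$ variables, and $f^{(i)}=\sym{Concat}(g^{(i)},h^{(i)})$ on $n+3i+1$ variables, giving claim~1 for $t=i$. For the inductive hypothesis I would carry the statement that $g^{(i-1)}$ and $h^{(i-1)}$ are both $(m+2(i-1))$-resilient, that $\sym{LB}(f^{(i-1)})=2^{-(i-1)}\sym{LB}(f)$, and that $\sym{AI}(g^{(i-1)}),\sym{AI}(h^{(i-1)})\geq\min\{\sym{AI}(g),\sym{AI}(h)\}$. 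Then: by Theorem~\ref{thm-iterated-prop}(1), $g^{(i)}=G$ and $h^{(i)}=H$ are $(m+2(i-1)+2)=(m+2i)$-resilient, giving claim~2; by Theorem~\ref{thm-iterated-prop}(6), $\sym{LB}(f^{(i)})=\sym{LB}(F)=\sym{LB}(f^{(i-1)})/2=2^{-i}\sym{LB}(f)$, giving claim~3; and by Theorem~\ref{thm-ai-lbnd-cons-iter}, $\sym{AI}(g^{(i)}),\sym{AI}(h^{(i)})\geq\min\{\sym{AI}(g^{(i-1)}),\sym{AI}(h^{(i-1)})\}\geq\min\{\sym{AI}(g),\sym{AI}(h)\}$, and $\sym{AI}(f^{(i)})=\sym{AI}(F)\geq\min\{\sym{AI}(g^{(i)}),\sym{AI}(h^{(i)})\}$, giving claim~4.

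For the gate count (claim~5), I would accumulate the counts from Theorem~\ref{thm-iterated-prop}: each of the $t$ iterations of $\sym{Step}$ costs $8$ XOR and $4$ AND gates (part~4), and the single final $\sym{Concat}$ that forms $f^{(t)}=F$ from $(G,H)$ costs $2$ XOR and $2$ AND gates (part~7). One subtlety to address: the intermediate $\sym{Concat}$ operations producing $f^{(i)}$ for $i<t$ inside the loop are not actually needed to build $f^{(t)}$ — only the final $\sym{Concat}$ is — since $\sym{Step}$ takes the pair $(g^{(i-1)},h^{(i-1)})$ directly as input, not $f^{(i-1)}$. So the total is $8t$ XOR and $4t$ AND from the $\sym{Step}$ calls plus $2$ XOR and $2$ AND from the final $\sym{Concat}$, namely $8t+2$ XOR and $4t+2$ AND gates, as claimed. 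I do not expect any serious obstacle here: the entire theorem is a bookkeeping exercise layered on top of Theorems~\ref{thm-iterated-prop} and~\ref{thm-ai-lbnd-cons-iter}. The only thing requiring a moment's care is making sure the inductive hypothesis is stated for the \emph{pair} $(g^{(i-1)},h^{(i-1)})$ (so that Theorem~\ref{thm-iterated-prop} applies with its hypothesis that \emph{both} functions are $m$-resilient) rather than just for $f^{(i-1)}$, and noting that $\sym{nl}(f)$ scales by the factors recorded in Theorem~\ref{thm-iterated-prop} so that the linear bias statement follows cleanly.
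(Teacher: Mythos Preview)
Your proposal is correct and follows exactly the approach the paper takes: the paper's entire proof is the single sentence ``The result follows from Theorem~\ref{thm-iterated-prop} by induction on $t$,'' and you have simply written out that induction in full (correctly also invoking Theorem~\ref{thm-ai-lbnd-cons-iter} for the algebraic-immunity claim, which the paper's one-liner tacitly assumes). One trivial slip: in your base case you say ``$0$ gates are needed'' for $t=0$, but forming $f^{(0)}=\sym{Concat}(g,h)$ already costs $2$ XOR and $2$ AND gates, matching the formula $8\cdot 0+2$ and $4\cdot 0+2$; since your actual gate-count argument for claim~5 is a direct tally rather than an induction, this does not affect the proof.
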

\begin{proof}
	The result follows from Theorem~\ref{thm-iterated-prop} by induction on $t$.
\end{proof}

By appropriately choosing the initial functions $g$ and $h$, Theorem~\ref{thm-iter-grow} can be used to obtain the resiliency, linear bias and lower bound
on algebraic immunity of the function $\sym{Iter}_t(g,h)$. This is stated in the following result.

\begin{theorem}\label{thm-gen-cons}
Let $n$ and $t$ be positive integers.
	\begin{enumerate}
		\item Let $n-3t-1\geq 3$ be an odd integer. Let $k=(n-3t-2)/2$, and $\psi:\{0,1\}^k\rightarrow\{0,1\}^k$ be a bit permutation. Define
	\begin{eqnarray}
		g^{(0)}(X_1,U_1,\ldots,U_k,V_1,\ldots,V_k) & = & X_1 \oplus (\psi,\sym{Maj}_k)\mbox{-}\sym{MM}_{2k}(U_1,\ldots,U_k,V_1,\ldots,V_k) \nonumber \\
		h^{(0)}(X_1,U_1,\ldots,U_k,V_1,\ldots,V_k) & = & 1\oplus X_1 \oplus (\psi,\sym{Maj}_k)\mbox{-}\sym{MM}_{2k}(U_1,\ldots,U_k,V_1,\ldots,V_k) \nonumber \\
		f^{(t)} & = & \sym{Iter}_t(g^{(0)},h^{(0)}). \nonumber
	\end{eqnarray}
	Then $f^{(t)}$ is an $n$-variable, $2t$-resilient function with linear bias equal to $2^{-(n-t-2)/2}$, algebraic immunity at least $\lceil (n-3t-2)/4\rceil$, and
	can be implemented using $O(n)$ gates.
\item Let $n-3t-1\geq 4$ be an even integer. Let $k=(n-3t-5)/2$, and $\psi:\{0,1\}^k\rightarrow\{0,1\}^k$ be a bit permutation. Define
	        \begin{eqnarray}
			\lefteqn{g^{(0)}(X_1,Z_1,Z_2,Z_3,U_1,\ldots,U_k,V_1,\ldots,V_k)} \nonumber \\
			& = & Z_1\oplus Z_2 \oplus X_1(Z_1\oplus Z_3) \oplus (\psi,\sym{Maj}_k)\mbox{-}\sym{MM}_{2k}(U_1,\ldots,U_k,V_1,\ldots,V_k) \nonumber \\
			\lefteqn{h^{(0)}(X_1,Z_1,Z_2,Z_3,U_1,\ldots,U_k,V_1,\ldots,V_k)} \nonumber \\
			& = & Z_1\oplus Z_3\oplus X_1Z_2\oplus (\psi,\sym{Maj}_k)\mbox{-}\sym{MM}_{2k}(U_1,\ldots,U_k,V_1,\ldots,V_k) \nonumber \\
                f^{(t)} & = & \sym{Iter}_t(g^{(0)},h^{(0)}). \nonumber
        \end{eqnarray}
			Then $f^{(t)}$ is an $n$-variable, $(2t+1)$-resilient function with linear bias equal to $2^{-(n-t-1)/2}$, algebraic immunity at least
			$\lceil (n-3t-5)/4 \rceil$, and can be implemented using $O(n)$ gates.
	\end{enumerate}
\end{theorem}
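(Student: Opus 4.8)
The strategy is to instantiate the machinery of Theorem~\ref{thm-iter-grow} with suitably chosen seed functions $g^{(0)}$ and $h^{(0)}$, and to verify the five hypotheses (resiliency, linear bias, algebraic immunity lower bound, and the gate count of the seeds plus the iteration overhead). The two parts are structurally identical, differing only in the parity of $n-3t-1$ and hence in the choice of seeds: in Part~1 the seeds are single-variable extensions of a Maiorana-McFarland-with-Majority bent function, while in Part~2 they carry an extra three $Z$-variables that together behave like (a piece of) the $5$-variable function $f_5$.

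\textbf{Part 1.} First I would record that $g^{(0)}$ and $h^{(0)}$ are $(k+1)$-variable functions, so with $n-3t-1 = k+1$, i.e. $k = (n-3t-2)/2$, Theorem~\ref{thm-iter-grow}(1) gives that $f^{(t)}$ has $(k+1)+3t+1 = n$ variables. Next, both seeds have the form (variable) $\oplus$ (even-variable bent function), hence each is $0$-resilient (adding a fresh variable to the bent function), so by Theorem~\ref{thm-iter-grow}(2) with $m=0$, $f^{(t)}$ is $2t$-resilient. For the linear bias: $g^{(0)}$ and $h^{(0)}$ are each a fresh variable plus a $2k$-variable bent function, so $\sym{LB}(g^{(0)}) = \sym{LB}(h^{(0)}) = 2^{-k}$; moreover $g^{(0)}$ and $h^{(0)}$ differ only by the constant $1$, so $\sym{Concat}(g^{(0)},h^{(0)})$ equals $g^{(0)}$ with one more variable adjoined (it is $X_{\mathrm{new}} \oplus (\psi,\sym{Maj}_k)\text{-}\sym{MM}_{2k}$ up to a linear change), whence $\sym{LB}(f) = 2^{-k}$. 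Then Theorem~\ref{thm-iter-grow}(3) yields $\sym{LB}(f^{(t)}) = 2^{-t}\cdot 2^{-k} = 2^{-(t+k)} = 2^{-(n-t-2)/2}$, using $t+k = t + (n-3t-2)/2 = (n-t-2)/2$. For algebraic immunity, by Theorem~\ref{thm-MMMaj-AI} the bent function $(\psi,\sym{Maj}_k)\text{-}\sym{MM}_{2k}$ has $\sym{AI} \geq \lceil k/2\rceil$; adding a fresh variable is a direct sum with a $1$-variable affine function, so Proposition~\ref{prop-AI-dsum} gives $\sym{AI}(g^{(0)}), \sym{AI}(h^{(0)}) \geq \lceil k/2\rceil$; then Theorem~\ref{thm-iter-grow}(4) gives $\sym{AI}(f^{(t)}) \geq \lceil k/2\rceil = \lceil (n-3t-2)/4\rceil$. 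Finally, the gate count: $(\psi,\sym{Maj}_k)\text{-}\sym{MM}_{2k}$ uses $O(k) = O(n)$ gates (Proposition~\ref{prop-maj-gc} for $\sym{Maj}_k$, the bit permutation $\psi$ is free, and the inner product $\langle\psi(\mathbf{U}),\mathbf{V}\rangle$ costs $O(k)$ AND/XOR gates), the extra variable costs one XOR, and the iteration adds $8t+2$ XOR and $4t+2$ AND gates by Theorem~\ref{thm-iter-grow}(5); since $t \leq n$, the total is $O(n)$.

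\textbf{Part 2.} The argument is the same template with $k+4$-variable seeds, so $n - 3t - 1 = k+4$, i.e. $k = (n-3t-5)/2$, and Theorem~\ref{thm-iter-grow}(1) gives $(k+4) + 3t + 1 = n$ variables. The one genuinely new point is that here $\sym{Concat}(g^{(0)},h^{(0)})$ is (up to adjoining the concatenation variable and an invertible linear change of the $X_1, Z_1, Z_2, Z_3$ variables) exactly the function $f_5$ of Proposition~\ref{prop-case-n=5} summed with the bent function: indeed the four sub-functions $g^{(0)}, h^{(0)}, 1\oplus h^{(0)}, 1\oplus g^{(0)}$ obtained by concatenating match the four affine branches in the alternative form of $f_5$ in the proof of Proposition~\ref{prop-case-n=5}. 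Hence I would verify, exactly as in that proof, that $g^{(0)}$ and $h^{(0)}$ are each $1$-resilient (each is a concatenation-free affine-in-$(X_1,Z_i)$ function plus a bent function, and a small direct check shows the $Z$-part is $1$-resilient); Theorem~\ref{thm-iter-grow}(2) with $m=1$ then gives $(2t+1)$-resilient. The linear bias of each seed is $2^{-1}\cdot 2^{-k} = 2^{-(k+1)}$ — the $Z$-part contributes bias $2^{-1}$ like the quadratic $f_5$-branch, and it adds with the bent part — and $\sym{Concat}$ does not change this (the concatenation variable is a fresh linear variable once the branches are $f_5$-aligned), so $\sym{LB}(f) = 2^{-(k+1)}$; Theorem~\ref{thm-iter-grow}(3) gives $\sym{LB}(f^{(t)}) = 2^{-(t+k+1)} = 2^{-(n-t-1)/2}$, using $t + k + 1 = t + (n-3t-5)/2 + 1 = (n-t-1)/2$. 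For algebraic immunity: each seed is a direct sum of a (at most) $4$-variable function with $(\psi,\sym{Maj}_k)\text{-}\sym{MM}_{2k}$, so Proposition~\ref{prop-AI-dsum} plus Theorem~\ref{thm-MMMaj-AI} give $\sym{AI}(g^{(0)}), \sym{AI}(h^{(0)}) \geq \lceil k/2\rceil$, and Theorem~\ref{thm-iter-grow}(4) gives $\sym{AI}(f^{(t)}) \geq \lceil k/2\rceil = \lceil (n-3t-5)/4\rceil$. The gate count is $O(k)$ for the bent part, $O(1)$ for the $Z$-branch, and $8t+2$ XOR, $4t+2$ AND for the iteration, totalling $O(n)$.

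\textbf{Main obstacle.} The routine parts (counting variables, tracking the linear bias through the halving at each $\sym{Step}$, applying the direct-sum/sub-function AI bounds) go through mechanically. The step requiring genuine care is the verification in Part~2 that the chosen $g^{(0)}, h^{(0)}$ actually assemble, via $\sym{Concat}$ and an invertible linear change of variables, into $f_5 \oplus (\text{bent})$ — and correspondingly that each seed is $1$-resilient on its own (not merely their concatenation). This is the point where one must match the explicit two-expression form of $f_5$ from Proposition~\ref{prop-case-n=5} against the branches $g^{(0)}, 1\oplus h^{(0)}, h^{(0)}, 1\oplus g^{(0)}$, and confirm each branch is an affine function non-degenerate on at least two variables so that resiliency is inherited. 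Everything else is bookkeeping on top of Theorems~\ref{thm-iter-grow}, \ref{thm-MMMaj-AI}, \ref{thm-maj-prop} and Propositions~\ref{prop-AI-dsum}, \ref{prop-maj-gc}.
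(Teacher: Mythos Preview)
Your overall strategy matches the paper's proof: verify the resiliency and algebraic immunity of the seeds, identify $f^{(0)}=\sym{Concat}(g^{(0)},h^{(0)})$ explicitly (as $X_1\oplus X_2\oplus(\text{bent})$ in Part~1 and as $f_5\oplus(\text{bent})$ in Part~2), read off $\sym{LB}(f^{(0)})$, and then invoke Theorem~\ref{thm-iter-grow}. The paper also checks 1-resiliency of each seed in Part~2 exactly as you outline, by writing the $(X_1,Z_1,Z_2,Z_3)$-part as a concatenation (over $X_1$) of two linear functions each non-degenerate on two variables. Two points need repair.

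\emph{Variable counts.} The seeds in Part~1 have $2k+1$ variables ($X_1,U_1,\ldots,U_k,V_1,\ldots,V_k$), not $k+1$; likewise in Part~2 they have $2k+4$ variables, not $k+4$. Your formulas $k=(n-3t-2)/2$ and $k=(n-3t-5)/2$ are correct, and with $2k+1$ (resp.\ $2k+4$) the count $(2k+1)+3t+1=n$ (resp.\ $(2k+4)+3t+1=n$) goes through.

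\emph{Linear bias in Part~2.} Your claim that ``$\sym{Concat}$ does not change'' the bias is wrong here: unlike Part~1, $g^{(0)}$ and $h^{(0)}$ do \emph{not} differ by a constant, so the concatenation variable $X_2$ is not simply a fresh additive variable. You already have the right tool in hand: since $f^{(0)}=f_5\oplus(\psi,\sym{Maj}_k)\mbox{-}\sym{MM}_{2k}$ and $\sym{LB}(f_5)=2^{-2}$ by Proposition~\ref{prop-case-n=5}, the direct sum gives $\sym{LB}(f^{(0)})=2^{-(k+2)}$, not $2^{-(k+1)}$. Then $\sym{LB}(f^{(t)})=2^{-(t+k+2)}$, and $t+k+2=t+(n-3t-5)/2+2=(n-t-1)/2$. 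Your line $t+k+1=(n-t-1)/2$ is itself an arithmetic slip (it equals $(n-t-3)/2$); the two errors cancelled to give the correct final exponent, but the argument as written is not valid. The paper obtains $\sym{LB}(f^{(0)})=2^{-(k+2)}$ by appealing to Theorem~\ref{thm-direct-m>1-odd}, which is the same computation.
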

\begin{proof}
	First suppose $n-3t-1$ is odd. 
	
	Note that due to the addition of the variable $X_1$, both $g^{(0)}$ and $h^{(0)}$ are balanced, i.e. 0-resilient.
	From Proposition~\ref{prop-AI-dsum}, the algebraic immunities of both $g^{(0)}$ and $h^{(0)}$ are at least the algebraic immunity of
	$(\psi,\sym{Maj}_k)\mbox{-}\sym{MM}_{2k}(U_1,\allowbreak \ldots,\allowbreak U_k,\allowbreak V_1,\allowbreak \ldots,\allowbreak V_k)$; from Theorem~\ref{thm-MMMaj-AI} 
	the algebraic immunity of $(\psi,\sym{Maj}_k)\mbox{-}\sym{MM}_{2k}(U_1,\allowbreak \ldots,\allowbreak U_k,\allowbreak V_1,\allowbreak \ldots,\allowbreak V_k)$ is at least 
	the algebraic immunity of $\sym{Maj}_k$; and from Theorem~\ref{thm-maj-prop}, the algebraic immunity of $\sym{Maj}_k$ is at least $\lceil k/2\rceil=\lceil (n-3t-2)/4\rceil$. 
	So the algebraic immunities of both $g^{(0)}$ and $h^{(0)}$ are at least $\lceil (n-3t-2)/4\rceil$. Define 
	\begin{eqnarray*}
		\lefteqn{f^{(0)}(X_1,X_2,U_1,\ldots,U_k,V_1,\ldots,V_k)} \nonumber \\
		& = & (1\oplus X_2)g^{(0)}(X_1,U_1,\ldots,U_k,V_1,\ldots,V_k) \oplus X_2h^{(0)}(X_1,U_1,\ldots,U_k,V_1,\ldots,V_k). 
	\end{eqnarray*}	
	Simplifying we have
	$f^{(0)}(X_1,X_2,U_1,\ldots,U_k,V_1,\ldots,V_k)=X_1\oplus X_2\oplus 
	(\psi,\sym{Maj}_k)\mbox{-}\sym{MM}_{2k}(U_1,\allowbreak \ldots,\allowbreak U_k,\allowbreak V_1,\allowbreak \ldots,\allowbreak V_k)$.
	From Theorem~\ref{thm-direct-m>1-even}, the linear bias of $f^{(0)}$ is equal to $2^{-(n-3t-2)/2}$.
	Note that $g^{(0)}$ and $h^{(0)}$ are functions of $n-3t-1$ variables and $f^{(0)}$ is a function of $n-3t$ variables. 
	Since $f^{(t)}=\sym{Iter}_t(g^{(0)},h^{(0)})$, from Theorem~\ref{thm-iter-grow} it follows that $f^{(t)}$ is a function of $n$ variables.
	Further, also from Theorem~\ref{thm-iter-grow}, the resiliency of $f^{(t)}$ is $2t$, linear bias is equal to $2^{-t}\sym{LB}(f^{(0)})=2^{-(n-t-2)/2}$,
	and algebraic immunity is at least $\lceil (n-3t-2)/4\rceil$. The implementation of $f^{(t)}$ requires the implementation of $\sym{Maj}_k$, and the
	implementation of the bit permutation $\psi$. The implementation of the bit permutation $\psi$ does not require any gates (only the appropriate connection
	pattern needs to be implemented). From Proposition~\ref{prop-maj-gc}, $\sym{Maj}_k$ can be implemented using $O(k)=O(n)$ gates. 
	From Theorem~\ref{thm-iter-grow} obtaining $f^{(t)}$ from $g^{(0)}$ and $h^{(0)}$ requires $O(t)=O(n)$ gates. So overall $f^{(t)}$ can be implemented using $O(n)$ gates.

	Next suppose $n-3t-1$ is even. By an argument similar to the case for $n-3t-1$ is odd, the algebraic immunities of both $g^{(0)}$
	and $h^{(0)}$ are at least $\lceil k/2\rceil=\lceil (n-3t-5)/4\rceil$. Define 
	\begin{eqnarray*}	
		\lefteqn{f^{(0)}(X_1,X_2,Z_1,Z_2,Z_3,U_1,\ldots,U_k,V_1,\ldots,V_k)} \nonumber \\
		& = & (1\oplus X_2)g^{(0)}(X_1,Z_1,Z_2,Z_3,U_1,\ldots,U_k,V_1,\ldots,V_k) \nonumber \\
		&   & \oplus X_2h^{(0)}(X_1,Z_1,Z_2,Z_3,U_1,\ldots,U_k,V_1,\ldots,V_k). 
	\end{eqnarray*}
	Simplifying we have
	\begin{eqnarray*}
		\lefteqn{f^{(0)}(X_1,X_2,Z_1,Z_2,Z_3,U_1,\ldots,U_k,V_1,\ldots,V_k)} \nonumber \\
		& = & f_5(X_1,X_2,Z_1,Z_2,Z_3) \oplus (\psi,h)\mbox{-}\sym{MM}_{2k}(U_1,\ldots,U_k,V_1,\ldots,V_k).
	\end{eqnarray*}
	From Theorem~\ref{thm-direct-m>1-odd}, $f^{(0)}$ is $1$-resilient. 
	Note that $Z_1\oplus Z_2\oplus X_1(Z_1\oplus Z_3)=(1\oplus X_1)(Z_1\oplus Z_2)\oplus X_1(Z_2\oplus Z_3)$, which is the concatenation of two
	linear functions both of which are non-degenerate on 2 variables, and hence is 1-resilient. Since $g^{(0)}$ is the direct sum of the 1-resilient function
	$Z_1\oplus Z_2\oplus X_1(Z_1\oplus Z_3)$ and a bent function, 
	it follows that $g^{(0)}$ is 1-resilient.  Similarly, we may write
	$Z_1\oplus Z_3\oplus X_1Z_2=(1\oplus X_1)(Z_1\oplus Z_3)\oplus X_1(Z_1\oplus Z_2\oplus Z_3)$ to see that $h^{(0)}$ is the direct sum
	of a 1-resilient function and a bent function, 
	and so it follows that $h^{(0)}$ is 1-resilient.
	From Theorem~\ref{thm-direct-m>1-odd}, the linear bias of $f^{(0)}$ is equal to $2^{-(2k+6-2)/2}=2^{-(n-3t-1)/2}$.
	As in the case for $n-3t-1$ being odd, from Theorem~\ref{thm-iter-grow}, $f^{(t)}$ is a function of $n$ variables, which is $(2t+1)$-resilient,
	having linear bias equal to $2^{-t}\sym{LB}(f^{(0)})=2^{-(n-t-1)/2}$, algebraic immunity at least $\lceil (n-3t-5)/4\rceil$, and can be implemented
	using $O(n)$ gates.
\end{proof}

\section{Achieving Provable Resiliency/Nonlinearity/Algebraic Immunity Trade-Offs \label{sec-trade-offs}}
In this section, we show new provable trade-offs between resiliency, nonlinearity and algebraic immunity. The precise result that we present is quite powerful.
Suppose $m_0$, $x_0$, and $a_0$ are given. We show that it is possible to construct an $n$-variable function which is at least $m_0$-resilient, has linear bias
at most $2^{-x_0}$, algebraic immunity at least $a_0$, and can be implemented using $O(n)$ gates, where the number of variables $n$ depends linearly on $m_0$, $x_0$ 
and $a_0$. As far as we are aware, there is no such comparable result in the literature.

In concrete terms, based on Theorems~\ref{thm-direct-m>1-even},~\ref{thm-direct-m>1-odd} and~\ref{thm-gen-cons}, the following result states how to achieve 
desired target values of the order of resiliency, linear bias, and algebraic immunity.
\begin{theorem}\label{thm-gen-trade-off}
Let $m_0$ be a non-negative integer, $x_0$ and $a_0$ be positive integers.  The following holds.
	\begin{enumerate}
		\item Let $n\geq m_0+1+2\cdot \max\{2a_0,x_0-1\}$. Then it is possible to construct an $n$-variable function whose resiliency order is $m_0$,
			linear bias is at most $2^{-x_0}$, algebraic immunity is at least $a_0$. 
		\item Let $n\geq m_0+5+2\cdot \max\{2a_0,x_0-3\}$. Then it is possible to construct an $n$-variable function whose resiliency order is $m_0$,
                        linear bias is at most $2^{-x_0}$, algebraic immunity is at least $a_0$. 
		\item Let $t=\lceil m_0/2\rceil$ and $n\geq \max\{2x_0+t,4a_0+3t+2\}$ be such that $n-3t-1$ is odd. Then it is possible to construct an $n$-variable function 
			whose resiliency order is $2t\geq m_0$, linear bias is at most $2^{-x_0}$, algebraic immunity is at least $a_0$. 
		\item Let $t=\lceil (m_0-1)/2\rceil$ and $n\geq \max\{2x_0+t-1,4a_0+3t+5\}$ be such that $n-3t-1$ is even. Then it is possible to construct an $n$-variable 
			function whose resiliency order is $2t+1\geq m_0$, linear bias is at most $2^{-x_0}$, algebraic immunity is at least $a_0$. 
	\end{enumerate}
	Further, in all of the above cases, the respective functions can be implemented using $O(n)$ gates.
\end{theorem}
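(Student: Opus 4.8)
The statement is a repackaging of the constructions of Sections~\ref{sec-basic} and~\ref{sec-iterate}, so the plan is to dispatch the four items one at a time, each as a single invocation of a previously established result followed by a short arithmetic check. For item~1 I would apply Theorem~\ref{thm-direct-m>1-even} with $m=m_0$ and $h=\sym{Maj}_k$; for item~2, Theorem~\ref{thm-direct-m>1-odd} with $m=m_0$; and for items~3 and~4, respectively, the first and the second part of Theorem~\ref{thm-gen-cons}. In each case the cited result already delivers a function with the asserted order of resiliency — exactly $m_0$ in items~1--2, and $2t\geq m_0$ (resp.\ $2t+1\geq m_0$) in item~3 (resp.\ item~4), thanks to the choice $t=\lceil m_0/2\rceil$ (resp.\ $t=\lceil(m_0-1)/2\rceil$), which is how the parity of $m_0$ is absorbed — and with an $O(n)$-gate implementation. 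Consequently the last sentence of the theorem is immediate, and the whole proof reduces to four parameter computations.

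Each such computation translates the displayed lower bound on $n$ into the two target inequalities, one on linear bias and one on algebraic immunity. For item~1, set $k=(n-m_0-1)/2$ (an integer, once $n$ is chosen with $n\not\equiv m_0\bmod2$ as Theorem~\ref{thm-direct-m>1-even} requires): the constructed function then has linear bias $2^{-k}$ and algebraic immunity at least $\lceil k/2\rceil$, and I would check that the bound $n\geq m_0+1+2\max\{2a_0,x_0-1\}$ is designed precisely so that $2^{-k}\leq 2^{-x_0}$ and $\lceil k/2\rceil\geq a_0$ hold simultaneously — a routine manipulation of the ceiling and the two inequalities on exponents. Items~2--4 proceed identically: for item~2 the internal parameter is $k=(n-m_0-4)/2$, with linear bias $2^{-(n-m_0)/2}$ and algebraic immunity $\geq\lceil k/2\rceil$; for items~3 and~4 the linear-bias/algebraic-immunity pairs supplied by Theorem~\ref{thm-gen-cons} are $\bigl(2^{-(n-t-2)/2},\,\lceil(n-3t-2)/4\rceil\bigr)$ and $\bigl(2^{-(n-t-1)/2},\,\lceil(n-3t-5)/4\rceil\bigr)$, and in each case one verifies that the $\max\{\cdot,\cdot\}$ in the hypothesis is exactly the threshold forcing the exponent of $2$ to be at least $x_0$ and the ceiling to be at least $a_0$. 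Throughout I would keep the parity requirements in view — $n\not\equiv m_0$ in item~1; $n\equiv m_0$ and $n\geq m_0+4$ in item~2; $n-3t-1$ odd in item~3 and even in item~4 — since these pin down the residue class of $n$ and hence confirm that the least admissible $n$ is still linear in $m_0,x_0,a_0$.

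I do not anticipate a genuine obstacle here: with Theorems~\ref{thm-direct-m>1-even},~\ref{thm-direct-m>1-odd} and~\ref{thm-gen-cons} in hand the result is essentially a corollary, and the only part that demands care is getting the constants inside the various $\max\{\cdot,\cdot\}$ expressions exactly right given the interplay of the ceilings in the algebraic-immunity bounds, the exact (bent-level) linear bias of the building blocks, and the parity constraints of each construction. The one structural point worth flagging is that the four items are not a partition of the possible tuples $(m_0,x_0,a_0,n)$ but four independently usable recipes: given a target triple $(m_0,x_0,a_0)$, one selects whichever recipe is applicable for the intended parity and size of $n$, and since every recipe keeps $n$ linear in $(m_0,x_0,a_0)$, the ``$n$ linear'' clause of the informal theorem stated in the introduction drops out.
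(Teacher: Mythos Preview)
Your proposal is correct and follows exactly the paper's approach: the paper's own proof is a three-sentence invocation of Theorems~\ref{thm-direct-m>1-even}, \ref{thm-direct-m>1-odd}, and~\ref{thm-gen-cons}, without even spelling out the arithmetic checks or parity considerations you describe. Your plan is therefore more detailed than what the paper actually provides.
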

\begin{proof}
	The first two points of the theorem follow from Theorems~\ref{thm-direct-m>1-even} and~\ref{thm-direct-m>1-odd} respectively. 
	The last two points follow from the two corresponding points of Theorem~\ref{thm-gen-cons}.
	The statement regarding the number of gates also follows from Theorems~\ref{thm-direct-m>1-even},~\ref{thm-direct-m>1-odd},
	and~\ref{thm-gen-cons}. 
\end{proof}

\begin{theorem}\label{thm-optimal}
Let $m_0$ be a non-negative integer, $x_0$ and $a_0$ be positive integers. 
	Any function which is at least $m_0$-resilient, has linear bias at most $2^{-x_0}$ and algebraic immunity at least $a_0$ requires
	at least $\max(m_0+1,2x_0,2a_0-1)-1$ gates for implementation.

	Consequently, in all the four cases of Theorem~\ref{thm-gen-trade-off}, the $n$-variable functions with the least value of $n$ requires
	an asymptotically optimal number of gates for achieving the target values of $m_0$, $x_0$ and $a_0$.
\end{theorem}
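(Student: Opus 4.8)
The plan is to deduce the first assertion from Proposition~\ref{prop-non-deg-gc} by bounding below the number of variables on which such an $f$ genuinely depends. Write $N\defeq\max(m_0+1,\,2x_0,\,2a_0-1)$, let $j$ be the number of variables on which $f$ is non-degenerate (its \emph{essential} variables), and let $\tilde f$ denote $f$ regarded as a function of only those $j$ variables, so $\tilde f$ is non-degenerate on all of them. First I would observe that hard-wiring the degenerate inputs of any circuit for $f$ to the constant $0$ produces a $j$-input circuit for $\tilde f$ with the same number of $2$-input gates, so Proposition~\ref{prop-non-deg-gc} shows every circuit for $f$ uses at least $j-1$ gates; it therefore suffices to prove $j\geq N$, i.e. the three inequalities $j\geq m_0+1$, $j\geq 2x_0$, and $j\geq 2a_0-1$.

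I would prove all three using two elementary facts about $f$: $W_f(\bm{\alpha})=0$ unless $\sym{supp}(\bm{\alpha})$ is contained in the set of essential variables, so at most $2^j$ of the Walsh coefficients $W_f(\bm{\alpha})$ are nonzero; and Parseval's relation $\sum_{\bm{\alpha}\in\mathbb{F}_2^n}W_f(\bm{\alpha})^2=2^{2n}$. If $m_0\geq j$, then $m_0$-resiliency forces $W_f(\bm{\alpha})=0$ for all $\bm{\alpha}$ with $\sym{wt}(\bm{\alpha})\leq m_0$, hence for all $\bm{\alpha}$ supported on the essential variables, hence for \emph{every} $\bm{\alpha}$, contradicting Parseval; thus $j\geq m_0+1$. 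Since at most $2^j$ Walsh coefficients are nonzero, Parseval also gives $\max_{\bm{\alpha}}|W_f(\bm{\alpha})|\geq 2^{n-j/2}$, i.e. $\sym{LB}(f)\geq 2^{-j/2}$, which with $\sym{LB}(f)\leq 2^{-x_0}$ yields $j\geq 2x_0$. For algebraic immunity I would apply the bound $\sym{AI}(\tilde f)\leq\lceil j/2\rceil$ of~\cite{DBLP:conf/eurocrypt/CourtoisM03}: $\tilde f$ (or its complement) has a nonzero annihilator of degree at most $\lceil j/2\rceil$, and lifting it to a function of all $n$ variables (constant in the non-essential ones) gives an annihilator of $f$ (or of $1\oplus f$) of the same degree, so $a_0\leq\sym{AI}(f)\leq\lceil j/2\rceil$ and hence $j\geq 2a_0-1$. (In fact $\sym{AI}(f)=\sym{AI}(\tilde f)$, the reverse inequality being Proposition~\ref{prop-sub-function} applied with the non-essential variables fixed.) Combining, $j\geq N$, and with Proposition~\ref{prop-non-deg-gc} this is the claimed lower bound of $N-1$ gates.

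For the second assertion, set $M\defeq\max(m_0,x_0,a_0)$. For each of the four cases of Theorem~\ref{thm-gen-trade-off} I would read off the least admissible value of $n$ (in cases~3 and~4 possibly one larger than the stated threshold, to meet the parity condition, which does not matter asymptotically); in every case this $n$ lies between $M$ and a constant times $M$, i.e. $n=\Theta(M)$, so the corresponding construction uses $O(n)=O(M)$ gates. Since $\max(m_0+1,2x_0,2a_0-1)\geq M$ (using $x_0\geq 1$ and $a_0\geq 1$), the first assertion shows that $\Omega(M)$ gates are necessary for any function with these parameters, so each of these minimal-$n$ constructions is asymptotically gate-optimal.

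I do not expect a genuine obstacle. The two points needing care are (a) the circuit-restriction step that allows Proposition~\ref{prop-non-deg-gc} to be invoked for $\tilde f$, and (b) the fact that restricting $f$ to its essential variables preserves resiliency, linear bias, and algebraic immunity — for the first two this is immediate from the support property of the Walsh transform and Parseval, and for the third it is the lifting argument above (equivalently Proposition~\ref{prop-sub-function}).
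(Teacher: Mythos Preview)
Your proposal is correct and follows essentially the same approach as the paper: show that any function meeting the three constraints must be non-degenerate on at least $\max(m_0+1,2x_0,2a_0-1)$ variables and then invoke Proposition~\ref{prop-non-deg-gc}, while for the second assertion observe that the minimal $n$ in each case of Theorem~\ref{thm-gen-trade-off} is $\Theta(\max(m_0,x_0,a_0))$. The paper simply asserts the three non-degeneracy lower bounds without argument; you supply those arguments (via the Walsh-support observation, Parseval, and the $\lceil j/2\rceil$ bound on algebraic immunity), and you also make explicit the hard-wiring step needed to apply Proposition~\ref{prop-non-deg-gc} to a function that is degenerate on some of its inputs---a point the paper leaves implicit.
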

\begin{proof}
	Any $m_0$-resilient function is non-degenerate on at least $m_0+1$ variables, any function with linear bias at most $2^{-x_0}$ is non-degenerate
	on at least $2x_0$ variables, and any function with algebraic immunity at least $a_0$ is non-degenerate on at least $2a_0-1$ variables. The first
	statement now follows from Proposition~\ref{prop-non-deg-gc}.
	To see the second statement, note that in all the four cases of Theorem~\ref{thm-gen-trade-off}, the lower bound on $n$ is linear in $\max(m_0,x_0,a_0)$.
%
\end{proof}

Given $m_0$, $x_0$, and $a_0$, each of the four points of Theorem~\ref{thm-gen-trade-off} provides infinitely many values of $n$ achieving the desired properties.
From an implementation point of view, for each of the four points, one would choose the smallest value of $n$ satisfying the stated conditions. This is given 
by the lower bounds on $n$ for the different cases. The trade-offs
achieved by the constructions in Theorems~\ref{thm-direct-m>1-even},~\ref{thm-direct-m>1-odd} and~\ref{thm-gen-cons} and summarised in Theorem~\ref{thm-gen-trade-off} 
are different. No one construction can
be said to subsume one of the other. In Table~\ref{tab-ex}, we provide some examples to illustrate this point. For each target value $(m_0,x_0,a_0)$, the table
provides $(n,m,x,a)$ for the various constructions, where $n$ is the smallest number of variables which guarantees the target values, while $(m,x,a)$ are the actual
values that are achieved.
\begin{table}
	{\scriptsize
\centering
	\begin{tabular}{|c|c|c|c|c|}
		\hline
		\multicolumn{1}{|c|}{target} & \multicolumn{4}{c|}{achieved} \\ \cline{2-5}
		 & Thm~\ref{thm-direct-m>1-even} & Thm~\ref{thm-direct-m>1-odd} & Thm~\ref{thm-gen-cons}(1) & Thm~\ref{thm-gen-cons}(2) \\
		$(m_0,x_0,a_0)$ & $(n,m,x,a)$ & $(n,m,x,a)$ & $(n,m,x,a)$ & $(n,m,x,a)$ \\ \hline
		(4,6,3) & (17,4,6,3) & (20,4,8,3) & (20,4,8,3) & (23,5,10,3) \\ \hline
		(4,6,4) & (21,4,8,4) & (24,4,10,4) & (24,4,10,4) & (27,5,12,4) \\ \hline
		(4,9,3) & (23,4,9,5) & (22,4,9,4) & (22,4,9,4) & (23,5,10,3) \\ \hline
		(4,9,4) & (23,4,9,5) & (24,4,10,4) & (24,4,10,4) & (27,5,12,4) \\ \hline
		(4,12,3) & (29,4,12,6) & (28,4,12,5) & (28,4,12,5) & (27,5,12,4) \\ \hline
		(4,12,4) & (29,4,12,6) & (28,4,12,5) & (28,4,12,5) & (27,5,12,4) \\ \hline
		(7,6,3) & (20,7,6,3) & (23,7,8,3) & (26,8,10,3) & (26,7,11,3) \\ \hline
		(7,6,4) & (24,7,8,4) & (27,7,10,4) & (30,8,12,4) & (30,7,13,4) \\ \hline
		(7,9,3) & (26,7,9,3) & (25,7,9,4) & (26,8,10,3) & (26,7,11,3) \\ \hline
		(7,9,4) & (26,7,9,5) & (27,7,10,4) & (30,8,12,4) & (30,7,13,4) \\ \hline
		(7,12,3) & (32,7,12,6) & (31,7,12,5) & (30,8,12,4) & (28,7,12,4) \\ \hline
		(7,12,4) & (32,7,12,6) & (31,7,12,5) & (30,8,12,4) & (30,7,13,4) \\ \hline
	\end{tabular}
	\caption{Examples of trade-offs achieved by the various constructions. \label{tab-ex} }
	}
\end{table}


\section{Conclusion \label{sec-conclu} }
We have described several constructions which provide functions with provable trade-offs between resiliency, linear bias, and algebraic immunity. 
As far as we are aware there is no previous work in the literture which addresses the trade-off question in the same generality that we do. The constructions
that we describe are simple and provide functions which can be efficiently implemented. Our work opens the possibility of several promising directions of new
research. One direction is to obtain constructions which achieve better provable trade-offs between resiliency, linear bias and algebraic immunity. From a practical
cryptographic point of view, it would be good to keep implementation efficiency in mind while obtaining new trade-offs. While the functions that we have described
require an asymptotically optimal number of gates, in concrete terms the possibility of obtaining functions which can be implemented with even smaller number of gates 
remain open. We hope that these questions will be of interest to the Boolean function research community and lead to new results in the future.



\end{document}